 \newtheorem{thm}{Theorem}
 \newtheorem{cor}[thm]{Corollary}
 \newtheorem{prop}[thm]{Proposition}
\theoremstyle{definition}
\newtheorem{rem}[thm]{Remark}
\newtheorem{exm}[thm]{Example}
  \newcommand{\f}{\mathbb{F}_{q}}
\begin{document}

\title{Stopping Sets of Algebraic Geometry Codes\thanks{The first two authors are supported by the National Key Basic Research Program of China (973 Grant No. 2013CB834204), and the National Natural
Science Foundation of China (Nos. 61171082, 10990011 and 60872025).
This paper was presented in part at the 9th Annual Conference on Theory and Applications of Models of Computation,
Turing Centenary Meeting, Chinese Academy of Sciences, Beijing, China,
May 16-21, 2012.
}}
\author{Jun Zhang\thanks{Jun Zhang is with the Chern Institute of Mathematics, Nankai University, Tianjin, 300071, P.R. China.
e-mail: zhangjun04@mail.nankai.edu.cn},
Fang-Wei Fu\thanks{Fang-Wei Fu is with the Chern Institute of Mathematics and LPMC, Nankai University, Tianjin, 300071, P.R. China.
e-mail: fwfu@nankai.edu.cn}
and Daqing Wan\thanks{Daqing Wan is with the Department of Mathematics, University of California, Irvine, CA 92697-3875, USA.
e-mail: dwan@math.uci.edu}
}
\date{}
 \maketitle
 
 \setlength{\baselineskip}{20pt}

\begin{abstract}
Stopping sets and stopping set distribution of a linear code play an important role
in the performance analysis of iterative decoding for this linear code.
Let $C$ be an $[n,k]$ linear code over $\f$ with parity-check matrix $H$,
where the rows of $H$ may be dependent.
Let $[n]=\{1,2,\cdots,n\}$ denote the set of column indices of $H$.
A \emph{stopping set} $S$ of $C$
with parity-check matrix $H$ is a subset of $[n]$ such that the
restriction of $H$ to $S$ does not contain a row of weight $1$. The
\emph{stopping set distribution} $\{T_{i}(H)\}_{i=0}^{n}$ enumerates
the number of stopping sets with size $i$ of $C$ with parity-check
matrix $H$. Denote $H^{*}$ the parity-check matrix consisting of all the non-zero
codewords in the dual code $C^{\bot}$.
In this paper, we study stopping sets
and stopping set distributions of some residue algebraic geometry (AG)
codes with parity-check matrix $H^*$. First, we give two descriptions of stopping sets of residue
AG codes. For the simplest AG codes, i.e., the generalized Reed-Solomon
codes, it is easy to determine all the stopping sets. Then we
consider AG codes from elliptic curves. We use the group structure
of rational points of elliptic curves to present a complete
characterization of stopping sets. Then the stopping sets, the
stopping set distribution and the stopping distance of the AG code
from an elliptic curve are reduced to the search, counting and
decision versions of the subset sum problem in the group of rational
points of the elliptic curve, respectively. Finally, for some special cases, we determine the
stopping set distributions of AG codes from elliptic curves.
\end{abstract}
\begin{IEEEkeywords}
Stopping sets, stopping set distribution, stopping distance, algebraic geometry codes, elliptic curves, subset sum problem.
\end{IEEEkeywords}

\bigskip

\section{Introduction}

Let $C$ be an $[n,k,d]$ linear code over $\f$ with length $n$,
dimension $k$ and minimum distance $d$. Let $H$ be a
parity-check matrix of $C$, where the rows of $H$ may be
dependent. Let $[n]=\{1,2,\cdots,n\}$ denote the set of column indices of $H$.
A \emph{stopping set} $S$ of $C$
with parity-check matrix $H$ is a subset of $[n]$ such that the
restriction of $H$ to $S$, say $H(S)$, does not contain a row of weight $1$. The
\emph{stopping set distribution} $\{T_{i}(H)\}_{i=0}^{n}$ enumerates
the number of stopping sets with size $i$ of $C$ with parity-check
matrix $H$. Note that the empty set $\emptyset$ is defined as a stopping set and $T_0(H)=1$.
A number of researchers have
recently studied the stopping sets and stopping set distributions of linear codes,
e.g.,
see~\cite{AA,ETG,1603762,1715531,4036420,4039663,5208498,6006580,fu,xia,weber,
wadayama,orlitsky,laendner,LLMC,MM,krishnan,kashyap,henk,hehn,han,feldman,
esmaeili,di,ghaffar,tanner,Hollmann}. Stopping sets and stopping set distribution of a linear code
are used to determine the performance of this linear code under iterative
decoding~\cite{di}.

The \emph{stopping distance} $s(H)$ of $C$ with the parity-check
matrix $H$ is the minimum size of nonempty stopping sets. It plays
an important role in the performance analysis of the iterative
decoding, just as the role of the minimum Hamming distance $d$ of a
code for maximum-likehood or algebraic decoding.  Analogously to the
redundancy of a linear code, Schwartz and Vardy~\cite{1603762}
introduced the \emph{stopping redundancy} $\rho(C)$,  the minimal
number of rows in the parity-check matrix $H$ for the linear code $C$
such that the stopping distance $s(H)=d$, to characterize the
minimal ``complexity'' of the iterative decoding for the code $C$.
The stopping redundancy of some linear codes such as Reed-Muller codes, cyclic codes and maximal distance separable (MDS) codes have been studied recently~\cite{1603762,1715531,4039663,5208498,han}.

Note that the stopping distance, the stopping sets and stopping set distribution
depend on the choice of the parity-check matrix $H$ of $C$. Recall that $H^{*}$ is the parity-check matrix consisting of all non-zero
codewords in the dual code $C^{\bot}$. For any parity-check matrix
$H$, it is obvious that $T_{i}(H)\geqslant T_{i}(H^{*})$ for all
$i$, since $H$ is a
sub-matrix formed by some rows of $H^{*}$. Although the iterative decoding with the parity-check matrix
$H^{*}$ has the highest decoding complexity, it achieves the best
possible performance as it has the smallest stopping set distribution. It is known from \cite{weber} and \cite{henk}
that the iterative decoding with the parity-check matrix $H^{*}$ is an
optimal decoding for the binary erasure channel. The stopping set distribution is used to characterize the performance under iterative decoding. So it is important to determine the stopping set distribution of $C$ with the parity-check matrix $H^*$. However, in general, it is difficult to determine the stopping set distribution of $C$ with the parity-check matrix $H^*$. Using finite geometry, Jiang {\it et al.}~\cite{6006580} gave  characterizations of stopping sets of
some Reed-Muller codes (the Simplex codes, the Hamming cods, the first order Reed-Muller codes and the extended Hamming codes). And they determined the stopping set distributions of these codes. Since the iterative decoding with parity-check matrix
$H^{*}$ has the highest decoding complexity, they \cite{6006580} considered a parity-check matrix $H$, a submatrix of $H^*$, such that the stopping set distribution of $C$ with parity-check matrix $H$ is the same as that with $H^*$, but has the smallest number of rows. Such a parity-check matrix $H$ is called \emph{optimal} in certain sense. In general, it is difficult to obtain an optimal parity-check matrix for a general linear code. In \cite{6006580}, they obtained optimal parity-check matrices for the Simplex codes, the Hamming cods, the first order Reed-Muller codes and the extended Hamming codes. They also proposed an interesting problem to determine the stopping set distributions of well known linear codes with the parity-check matrix $H^*$. In this paper, we consider AG codes and a specific class of AG codes, i.e., AG codes associated with elliptic curves.

 \emph{From now on, we always choose the parity-check matrix $H^{*}$ for linear codes in this paper.} It is well-known that
\begin{prop}[\cite{1603762}]\label{stopdist} Let $C$ be a linear code with minimum distance $d(C)$,
and let $H^{*}$ denote the parity-check matrix for $C$ consisting of
all the nonzero codewords of the dual code $C^{\bot}$. Then the stopping distance
$s(H^{*})=d(C)$.
\end{prop}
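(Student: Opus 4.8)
We need to show $s(H^*) = d(C)$, where $H^*$ has rows indexed by all nonzero dual codewords. The strategy is to establish the two inequalities $s(H^*) \leq d(C)$ and $s(H^*) \geq d(C)$.

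For $s(H^*) \leq d(C)$: let $c \in C$ be a minimum-weight codeword, $\mathrm{wt}(c) = d(C)$, and let $S = \mathrm{supp}(c)$. I claim $S$ is a stopping set for $H^*$. Suppose not: then some row $h$ of $H^*$ (i.e. some nonzero $h \in C^\perp$) restricted to $S$ has weight $1$, say $h_j \neq 0$ for a unique $j \in S$ and $h_i = 0$ for all other $i \in S$. Then $\langle h, c\rangle = \sum_{i\in S} h_i c_i = h_j c_j \neq 0$, contradicting $h \in C^\perp$. Hence $S$ is a nonempty stopping set of size $d(C)$, giving $s(H^*)\le d(C)$.

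For $s(H^*) \geq d(C)$: let $S$ be any nonempty stopping set for $H^*$; I want $|S| \geq d(C)$. The key point is that if $|S| < d(C)$, then the restriction of $C^\perp$ to the coordinates in $S$ has full dimension $|S|$ — equivalently, the columns of $H^*$ indexed by $S$ are linearly independent as a set of rows... more precisely, since every codeword of $C$ has weight $\geq d(C) > |S|$, no nonzero codeword of $C$ is supported within $S$, so the projection map $C \to \f^{S}$ has trivial image restricted to... rather, the dual code $C^\perp$ projected onto coordinates $S$ is all of $\f^S$ (because the dual of $C$ projected on $S$ equals the dual of $C$ punctured/shortened appropriately — here since $C$ has no nonzero codeword supported in $S$, the shortened code $C_{\bar S}$ of $C$ outside $S$... the cleanest route: $\dim(\text{projection of } C^\perp \text{ on } S) = |S| - \dim(\text{codewords of } C^\perp \text{ vanishing outside } S)$ is not quite it either). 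The clean statement I will use: since $|S| \le d(C)-1$, any $|S|-1 \le d(C)-2 < d(C)-1$ coordinates... Actually the right fact is that $d(C) - 1$ equals the largest $r$ such that every $r$ columns of a generator matrix of $C^\perp$... no. Let me use: $C^\perp$ restricted to any set of $\le d(C)-1$ coordinates is the full space $\f^S$, which holds because the dual distance interpretation says $C$ has no nonzero codeword supported on $S$ iff the $|S|$ columns of $H^*$ (equivalently a generator matrix of $C^\perp$) restricted to $S$ span $\f^{|S|}$. Granting this, pick $j \in S$; there is a dual codeword $h$ with $h_j = 1$ and $h_i = 0$ for all $i \in S \setminus\{j\}$, so the row $h$ of $H^*$ has weight exactly $1$ on $S$, contradicting that $S$ is a stopping set. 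Therefore $|S| \geq d(C)$.

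The main obstacle — really the only nontrivial point — is justifying that when $|S| < d(C)$ the projection of $C^\perp$ onto the coordinates $S$ is all of $\f^S$; this is the standard duality fact that "$C$ has minimum distance $d$" is equivalent to "every $d-1$ columns of a parity-check matrix of $C$ are linearly independent," applied with the roles of $C$ and $C^\perp$ interchanged (a generator matrix of $C^\perp$ is a parity-check matrix of $C$). Once this is in hand, both inequalities are immediate, and combining them yields $s(H^*) = d(C)$.
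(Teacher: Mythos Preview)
Your argument is correct. Note, however, that the paper does not give its own proof of this proposition: it is simply quoted from Schwartz and Vardy~\cite{1603762}, so there is no in-paper proof to compare against.

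For the record, the route you take is the standard one. The inequality $s(H^*)\le d(C)$ holds for \emph{any} parity-check matrix, since the support of a minimum-weight codeword is always a stopping set (your orthogonality argument). The reverse inequality uses the specific choice $H^*$: if $|S|\le d(C)-1$, then any $|S|$ columns of a parity-check matrix of $C$ (equivalently a generator matrix of $C^\perp$) are linearly independent, which is exactly the statement that the projection $C^\perp\to\f^S$ is surjective; hence for each $j\in S$ there is a nonzero $h\in C^\perp$ with $h|_S$ equal to the $j$-th unit vector, and $S$ is not a stopping set. Your write-up meanders a bit in the middle before landing on this clean formulation, but the final justification you give is the right one.
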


Note that the generalized Reed-Solomon codes are MDS codes. For the $[n,k,d]$ MDS code $C$, i.e., $d=n-k+1$, its dual code $C^{\bot}$ is still an $[n,n-k,k+1]$ MDS code. Since any non-zero codeword in $C^{\bot}$ has at most $n-k-1$ zeros and any $(n-k)$ positions form an information set, we have

\begin{prop}\label{rs}
Let $C$ be an $[n,k,n-k+1]$ MDS code. Then
\begin{description}
\item[(i)]  any subset of $[n]$ with cardinality $\geqslant n-k+1$ is
a stopping set;

\item[(ii)] any non-empty subset of $[n]$ with cardinality $\leqslant n-k$
is not a stopping set.
\end{description}
\end{prop}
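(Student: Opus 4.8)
The plan is to work directly from the definition of a stopping set in terms of the parity-check matrix $H^{*}$: a subset $S\subseteq[n]$ is a stopping set for $C$ with parity-check matrix $H^{*}$ if and only if no nonzero codeword $c\in C^{\bot}$ satisfies $|\,\mathrm{supp}(c)\cap S\,|=1$, since the rows of $H^{*}$ are exactly the nonzero codewords of $C^{\bot}$ and a weight-$1$ row of $H^{*}(S)$ is precisely such a codeword. Throughout I use that $C^{\bot}$ is an $[n,n-k,k+1]$ MDS code, so every nonzero $c\in C^{\bot}$ has weight at least $k+1$, i.e.\ at most $n-k-1$ zero coordinates, and any $n-k$ coordinates of $C^{\bot}$ form an information set.

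For part (i) I would argue by contradiction. Let $|S|\geqslant n-k+1$ and suppose $S$ is not a stopping set. Then there is a nonzero $c\in C^{\bot}$ with $|\,\mathrm{supp}(c)\cap S\,|=1$, so $c$ vanishes on at least $|S|-1\geqslant n-k$ coordinates. This forces $\mathrm{wt}(c)\leqslant n-(n-k)=k<k+1$, contradicting the minimum distance of $C^{\bot}$. Hence $S$ is a stopping set.

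For part (ii), let $S$ be nonempty with $|S|\leqslant n-k$, fix any $i\in S$, and put $A=S\setminus\{i\}$, so $|A|\leqslant n-k-1$. Since $|S|\leqslant n-k$, the set $S$ is contained in some information set of $C^{\bot}$, so the coordinate projection $\pi_{S}\colon C^{\bot}\to\f^{S}$ is surjective. In particular there exists $c\in C^{\bot}$ whose restriction to $S$ is the indicator vector of $\{i\}$, i.e.\ $c$ is zero on $A$ and nonzero at $i$, so $|\,\mathrm{supp}(c)\cap S\,|=1$. Thus $H^{*}(S)$ contains a row of weight $1$ and $S$ is not a stopping set.

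The argument is short, and the only point needing care is the passage from the MDS/information-set property to the surjectivity of $\pi_{S}$ in part (ii): one extends $S$ to an $(n-k)$-subset $A'$, notes $\pi_{A'}$ is a bijection because $A'$ is an information set, and factors $\pi_{S}$ through $\pi_{A'}$ followed by a coordinate projection. Equivalently, the subcode of $C^{\bot}$ vanishing on $A$ has dimension $(n-k)-|A|\geqslant 1$ while the subcode vanishing on all of $S$ has dimension $(n-k)-|S|=(n-k)-|A|-1$, so the former contains a codeword not vanishing at $i$. Everything else is just the Singleton bound applied to $C^{\bot}$.
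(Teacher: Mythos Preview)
Your argument is correct and follows essentially the same route as the paper: the paper's proof is the one-sentence remark preceding the proposition, which invokes exactly the two facts you use, namely that every nonzero codeword of the MDS dual $C^{\bot}$ has at most $n-k-1$ zeros (your part (i)) and that any $n-k$ positions form an information set of $C^{\bot}$ (your part (ii)). You have simply supplied the details the paper leaves implicit.
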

By Proposition \ref{rs}, we obtain the stopping set distribution of  MDS codes.
\begin{cor}
Let $C$ be an $[n,k,n-k+1]$ MDS code. Then
the stopping set distribution of $C$ is given by
\begin{displaymath}
T_{i}(H^{*}) = \left\{
\begin{array}{ll}
  1, & \mbox{if $i=0$,} \\
  0, & \mbox{if $1\leqslant i\leqslant n-k$,} \\
  \binom{n}{i}, & \mbox{if $i\geqslant n-k+1$\ .}
\end{array}\right.
\end{displaymath}
\end{cor}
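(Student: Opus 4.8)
The plan is to read off the three cases of the claimed formula directly from Proposition~\ref{rs}, together with the standing convention that the empty set is a stopping set. Since the index set $\{0,1,\dots,n\}$ is partitioned as $\{0\}\cup\{1,\dots,n-k\}\cup\{n-k+1,\dots,n\}$, it suffices to treat these three ranges separately.

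First, for $i=0$ there is a unique subset of $[n]$ of size $0$, namely $\emptyset$, which is a stopping set by definition, so $T_0(H^*)=1$. Next, for $1\leqslant i\leqslant n-k$, Proposition~\ref{rs}(ii) says that no non-empty subset of $[n]$ of cardinality at most $n-k$ is a stopping set; in particular none of the subsets of size $i$ is, so $T_i(H^*)=0$. Finally, for $i\geqslant n-k+1$, Proposition~\ref{rs}(i) says that every subset of $[n]$ of cardinality at least $n-k+1$ is a stopping set; since there are exactly $\binom{n}{i}$ subsets of $[n]$ of size $i$ and all of them qualify, we get $T_i(H^*)=\binom{n}{i}$.

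There is essentially no obstacle here: the corollary is a direct bookkeeping consequence of Proposition~\ref{rs}. The only point worth a moment's care is the degenerate case $k=n$, where $C^{\bot}=\{0\}$ and $H^*$ has no rows at all; then the middle range $1\leqslant i\leqslant n-k$ is empty and every subset of $[n]$ is vacuously a stopping set, which is exactly what the formula predicts ($T_i(H^*)=\binom{n}{i}$ for all $i\geqslant 1=n-k+1$). Hence the formula holds in all cases.
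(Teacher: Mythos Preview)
Your proof is correct and follows exactly the approach the paper intends: the corollary is stated immediately after Proposition~\ref{rs} with only the remark ``By Proposition~\ref{rs}, we obtain the stopping set distribution of MDS codes,'' and your argument simply makes that one-line deduction explicit case by case. The extra remark about the degenerate case $k=n$ is harmless but not needed in the paper's setting.
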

As a generalization of the generalized Reed-Solomon codes, next we study
the stopping sets and stopping set distributions of AG codes.
\begin{flushleft}
\textbf{Constructions of AG codes.}
\end{flushleft}
 Without more special instructions, we fix some notation valid for
the entire paper.
\begin{itemize}
\item\emph{
 $X/\f$ is a geometrically irreducible smooth projective curve of genus $g$  over the finite field $\f$
with function field $\f(X)$. }
\item \emph{ $X(\f)$ is the set of all $\f$-rational points on $X$.}
\item\emph{ $D=\{P_{1},P_{2},\cdots,P_{n}\}$ is a proper subset of $X(\f)$.}
\item\emph{Without any confusion, also write
$D=P_{1}+P_{2}+\cdots+P_{n}$.}
\item\emph{ $G$ is a divisor of degree $m$ ($2g-2<m<n$) with
$\mathrm{Supp}(G)\cap D=\emptyset$.}
\end{itemize}
~~~Let $V$  be a divisor on $X$. Denote by $\mathscr{L}(V)$ the
$\f$-vector space of all rational functions $f\in \f(X)$ with the
principal divisor $\mathrm{div}(f)\geqslant -V$, together with the
zero function. And Denote by $\Omega(V)$ the $\f$-vector space of
all the Weil differentials $\omega$ with divisor
$\mathrm{div}(\omega)\geqslant V$, together with the zero
differential (cf.~\cite{Stichtenoth}). For any $\f$-rational point $P$ on $X$, choose one uniformizer
$t$ for $P$. Then for any differential $\omega$, we can write $\omega=u dt$ with some $u\in \f(X)$.
Write the $P$-adic expansion $u=\sum_{i=i_0}^{\infty}a_it^i$ for some $i_0\in \mathbb{Z}$ and $a_i\in \f$, the
residue map of $\omega$ at the point $P$ is defined to be
\[
  res_P(\omega)=res_{P,t}(u)=a_{-1}.
\]
 One can show that the above definition is well-defined~\cite[Proposition~4.2.9]{Stichtenoth}.

The residue AG code $C_{\Omega}(D, G)$ is defined to be the
image of the following residue map:
\begin{equation*}
    \begin{array}{cccl}
      res: & \Omega(G-D) & \rightarrow & \f^{n} \\
       & \omega & \mapsto & (res_{P_{1}}(\omega),res_{P_{2}}(\omega),\cdots,res_{P_{n}}(\omega))\ .
    \end{array}
\end{equation*}
 And its dual
code, the functional AG code $C_{\mathscr{L}}(D, G)$ is defined to
be the image of the following evaluation map:
\[
       ev: \mathscr{L}(G)\rightarrow \f^{n};\, f\mapsto
       (f(P_{1}),f(P_{2}),\cdots,f(P_{n}))\ .
\]
They are linear codes over $\f$, and have the code parameters $[n, n-m+g-1, d\geqslant m-2g+2]$ and
$[n, m-g+1, d\geqslant n-m]$, respectively. And they can be represented from each other~\cite[Proposition~8.1.2]{Stichtenoth}.

For the simplest AG codes, i.e., the generalized Reed-Solomon codes, we have determined all
the stopping sets. Then we consider the AG codes $C_{\Omega}(D, G)$ from elliptic curves. In this case, using the Riemann-Roch theorem, the stopping sets can be characterized completely as follows.

{\bf Main Theorem.}
Let $E$ be an elliptic curve over $\f$,
$D=\{P_{1},P_{2},\cdots,P_{n}\}$ a  subset of $E(\f)$ such that the
zero element $O\notin D$ and let $G=mO$ ($0<m<n$). The non-empty stopping sets
of the residue code $C_{\Omega}(D, G)$ are given as follows:
\begin{description}
\item[(i)]  Any non-empty subset of $[n]$ with cardinality $\leqslant m-1$ is
not a stopping set.

\item[(ii)] Any subset of $[n]$ with cardinality $\geqslant m+2$ is
a stopping set.

\item[(iii)] $A\subseteq [n]$, $\#A=m+1$, is a stopping set if and
only if for all $i\in A$, the sum
\[
\sum_{j\in A\setminus\{i\}}P_{j}\neq O\ .
\]

\item[(iv)] $A\subseteq [n]$, $\#A=m$, is a stopping set if and only
if
\[
      \sum_{j\in A}P_{j}=O\ .
\]
\item[(v)] Denote by $S(m)$ and $S(m+1)$ the two sets of stopping
sets with cardinality $m$ and $m+1$  in the cases (iv) and
(iii), respectively. Let
\[
  S^{+}(m)=\bigcup_{A\in S(m)}\{A\cup\{i\}: i\in [n]\setminus A\}\ .
\]
Then the union in $S^{+}(m)$ is a disjoint union, and we have
\[
   S(m+1)\cap S^{+}(m)=\emptyset\ ,
\]
and
\begin{equation*}
      S(m+1) =  \{\mbox{all subsets of $[n]$ with cardinality} m+1\}\setminus S^{+}(m)\ .
\end{equation*}
\end{description}
~~~The proof will be given in Section 3. By this theorem, the stopping set distribution of
$C_{\Omega}(D, G)$ follows immediately.
\begin{thm}\label{distofell}
Notation as above. The
stopping set distribution of $C_{\Omega}(D, G)$ with the
parity-check matrix $H^{*}$ is
\begin{displaymath}
T_{i}(H^{*}) = \left\{
\begin{array}{ll}
  1, & \mbox{if $i=0$,} \\
  0, & \mbox{if $1\leqslant i\leqslant m-1$,} \\
  \#S(m), &\mbox{if $i=m$,}\\
  \binom{n}{m+1}-(n-m)\#S(m), & \mbox{if $i=m+1$,} \\
  \binom{n}{i}, & \mbox{if $i\geqslant m+2$\ .}
\end{array}\right.
\end{displaymath}
\end{thm}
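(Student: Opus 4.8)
The plan is to derive the distribution case-by-case directly from the Main Theorem, so that the only real content is a short counting argument for the size $m+1$ case. For $i=0$ the empty set is, by convention, the unique stopping set of size $0$, so $T_0(H^*)=1$. For $1\leqslant i\leqslant m-1$, part (i) of the Main Theorem says that no non-empty subset of $[n]$ of this size is a stopping set, so $T_i(H^*)=0$. For $i\geqslant m+2$, part (ii) says that every subset of $[n]$ of that size is a stopping set, so $T_i(H^*)=\binom{n}{i}$. Finally, the case $i=m$ is immediate from part (iv) together with the definition of $S(m)$: $T_m(H^*)=\#S(m)$.

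The only case requiring a (brief) argument is $i=m+1$. By part (v) of the Main Theorem,
\[
S(m+1)=\{\text{all subsets of }[n]\text{ of cardinality }m+1\}\setminus S^{+}(m),
\]
so $T_{m+1}(H^*)=\#S(m+1)=\binom{n}{m+1}-\#S^{+}(m)$, and it remains to evaluate $\#S^{+}(m)$. By definition $S^{+}(m)=\bigcup_{A\in S(m)}\{A\cup\{i\}:i\in[n]\setminus A\}$. For each fixed $A\in S(m)$ there are exactly $n-\#A=n-m$ choices of $i\in[n]\setminus A$, and the corresponding sets $A\cup\{i\}$ are pairwise distinct. Part (v) of the Main Theorem asserts that the union defining $S^{+}(m)$ is a disjoint union, i.e.\ no set of size $m+1$ arises from two distinct $A\in S(m)$; hence $\#S^{+}(m)=(n-m)\,\#S(m)$, and therefore $T_{m+1}(H^*)=\binom{n}{m+1}-(n-m)\,\#S(m)$.

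The point that needs care is the disjointness invoked above — that $A\cup\{i\}=A'\cup\{i'\}$ with $A,A'\in S(m)$ forces $A=A'$ — but this is precisely the first assertion of part (v) of the Main Theorem, which is proved in Section~3; in the present argument it is used as a black box. So the main obstacle of the whole development lies in the Main Theorem rather than in this corollary, and assembling the five cases above completes the proof.
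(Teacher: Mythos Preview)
Your proposal is correct and matches the paper's approach: the paper states that Theorem~\ref{distofell} ``follows immediately'' from the Main Theorem, and your case-by-case derivation is exactly this immediate deduction, with the only nontrivial step (the count $\#S^{+}(m)=(n-m)\#S(m)$ via disjointness) being precisely the content of part~(v), which you correctly cite as a black box.
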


Then by Theorem~\ref{distofell}, we easily see that the stopping
distance of $C_{\Omega}(D, G)$ is $m$ or
$m+1$. But to decide it is equivalent to a decision
version of $m$-subset sum problem~\cite{liwan08,Klosters,liwan} in the group
$E(\f)$, which is an \textbf{NP}-hard problem under
{\bf RP}-reduction~\cite{chengqi}. Hence to compute the stopping distance
of $C_{\Omega}(D, G)$ is \textbf{NP}-hard under \textbf{RP}-reduction. To compute the
stopping set distribution is a counting version of
$m$-subset sum problem in the group $E(\f)$, so it is
also an \textbf{NP}-hard problem. But for a special $D\subseteq E(\f)$ with
strong algebraic structure, it is possible to compute the complete
stopping set distribution. For instance, if we take
$D=P\setminus \{O\}$, where $P$ is a subgroup of $E(\f)$. In particular, in application we always choose $D=E(\f)\setminus \{O\}$ to get a long linear code which is called standard elliptic code. Denote $N=\left|P\right|$ the cardinality of $P$, $\exp(P)$ the exponent of $P$, $P[d]$ the
$d$-torsion subgroup of $P$, and
\[
   N(m) =\frac{1}{N}\sum_{s|\exp(P)}(-1)^{m+\lfloor\frac{m}{s}\rfloor}\binom{N/s-1}{\lfloor m/s\rfloor}\sum_{d|s}\mu(s/d)\#P[d]\ ,
\]
respectively. It is known from ~\cite{liwan,Klosters} that $\#S(m)=N(m)$. Hence, we have
\begin{thm}\label{dist}
Let $D=P\setminus \{O\}$, where $P$ is a subgroup of $E(\f)$. The stopping set distribution of $C_{\Omega}(D, G)$ with the
parity-check matrix $H^{*}$ is
\begin{displaymath}
T_{i}(H^{*}) = \left\{
\begin{array}{ll}
  1, & \mbox{if $i=0$,} \\
  0, & \mbox{if $1\leqslant i\leqslant m-1$,} \\
  N(m), & \mbox{if $i=m$,}\\
  \binom{n}{m+1}-(n-m)N(m), &\mbox{if $i=m+1$,} \\
  \binom{n}{i}, &\mbox{if $i\geqslant m+2$\ .}
\end{array}\right.
\end{displaymath}
\end{thm}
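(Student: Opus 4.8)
The plan is to read off Theorem~\ref{dist} from Theorem~\ref{distofell}: the distribution in Theorem~\ref{distofell} is already in closed form except for the single quantity $\#S(m)$, so the whole task reduces to evaluating $\#S(m)$ in the case $D=P\setminus\{O\}$. By part~(iv) of the Main Theorem, an $m$-element set $A\subseteq[n]$ is a stopping set of $C_{\Omega}(D,G)$ if and only if $\sum_{j\in A}P_{j}=O$ in the group $P$. Since $O\notin D=P\setminus\{O\}$, this exhibits $S(m)$ as the collection of all $m$-element subsets of $P\setminus\{O\}$ whose elements sum to the identity of $P$; equivalently, $\#S(m)$ is exactly the output of the counting version of the $m$-subset sum problem in the finite abelian group $P$ with target value $O$.

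The second step is to invoke the closed-form solution of that counting problem. By the Li--Wan sieve for counting distinct-coordinate solutions over a finite abelian group, specialized to subset sums in~\cite{liwan,Klosters}, this count equals the quantity $N(m)$ displayed just before the statement, so $\#S(m)=N(m)$. I would not reprove this; I would indicate only the mechanism, namely: write the indicator of the condition ``sum $=O$'' as an average over the characters of $P$, pass from ordered $m$-tuples of distinct nonzero elements to unordered $m$-subsets by inclusion--exclusion over the partition lattice, and observe that a character of order $s$ contributes only through $s$, the factor $\binom{N/s-1}{\lfloor m/s\rfloor}$ recording how the $N-1$ nonzero elements distribute among the cosets of its kernel and the inner sum $\sum_{d\mid s}\mu(s/d)\#P[d]$ counting the characters of exact order $s$ (using $\widehat{P}\cong P$). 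Since this is precisely the fact quoted from~\cite{liwan,Klosters}, in the final write-up it suffices to cite it.

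Finally, substituting $\#S(m)=N(m)$ into the lines $i=m$ and $i=m+1$ of Theorem~\ref{distofell} (the other lines, $i=0$, $1\leqslant i\leqslant m-1$, and $i\geqslant m+2$, are unchanged, and recall $n=|D|=N-1$) gives exactly the asserted table, in particular $T_{m}(H^{*})=N(m)$ and $T_{m+1}(H^{*})=\binom{n}{m+1}-(n-m)N(m)$. The only genuine difficulty lies in the identity $\#S(m)=N(m)$: one must set up the subset-sum count over $P\setminus\{O\}$ rather than over all of $P$, control the diagonal terms introduced by the sieve, and match the surviving character sums with the torsion quantities $\#P[d]$. The remaining steps --- the reduction of $S(m)$ to a subset-sum count through part~(iv) of the Main Theorem and the substitution into Theorem~\ref{distofell} --- are routine.
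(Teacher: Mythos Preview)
Your proposal is correct and follows essentially the same route as the paper: reduce to Theorem~\ref{distofell} so that only $\#S(m)$ remains to be evaluated, identify $\#S(m)$ via part~(iv) of the Main Theorem with the subset-sum count $N(m,O,P\setminus\{O\})$, and then cite the formula from~\cite{liwan,Klosters} (the paper records this as Proposition~\ref{klosters} and Theorem~\ref{distribution}) to obtain $\#S(m)=N(m)$. Your sketch of the character/sieve mechanism behind the cited formula is accurate but, as you note, unnecessary for the write-up since the paper likewise simply invokes the reference.
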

This paper is organized as follows. In Section 2, we study stopping
sets of an arbitrary AG code and give algebraic and geometric descriptions of stopping
sets. In Section 3, we study the stopping sets and stopping
set distributions of AG codes $C_{\Omega}(D, G)$ from elliptic curves. We use the group structure
of rational points of elliptic curves to present a complete
characterization of stopping sets. It is shown that the stopping sets, the
stopping set distribution and the stopping distance of the AG code
from an elliptic curve can be reduced to the search, counting and
decision versions of the subset sum problem in the group of rational
points of the elliptic curve, respectively. We present the counting formula for
the stopping set distributions of AG codes from elliptic curves. In particular,
for some special cases, we determine explicitly the
stopping set distributions of AG codes from elliptic curves. Finally, some conclusions and open problems are given in Section 4.

\section{Stopping Sets of Algebraic Geometry Codes}
Let $X/\f$ be a geometrically irreducible smooth projective curve of genus $g$ over the finite field $\f$
with function field $\f(X)$, and $C_{\Omega}(D, G)$ the residue AG code from $X$. In this section, we study stopping sets and stopping set distributions of general residue AG codes and give algebraic and geometric descriptions of the stopping sets of $C_{\Omega}(D, G)$.
\begin{thm}\label{ss}
 A subset $A\subseteq [n]$ is a stopping
set of $C_{\Omega}(D, G)$ if and only if
\[
      \mathscr{L}(G-\sum_{j\in A}P_{j})=\bigcup_{i\in
      A}\mathscr{L}(G-\sum_{j\in A\setminus\{i\}}P_{j})\ .
\]
\end{thm}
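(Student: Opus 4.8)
The plan is to unwind the definition of a stopping set in terms of the residue map and then translate each row‐of‐weight‐one condition into a statement about Riemann–Roch spaces. Recall that the parity‐check matrix $H^{*}$ for $C_{\Omega}(D,G)$ has as its rows all nonzero codewords of the dual code $C_{\mathscr{L}}(D,G)$, i.e.\ all nonzero vectors $\mathrm{ev}(f)=(f(P_1),\dots,f(P_n))$ with $f\in\mathscr{L}(G)$. For a subset $A\subseteq[n]$, the restriction $H^{*}(A)$ contains a row of weight $1$ precisely when there is some $f\in\mathscr{L}(G)$ and some $i\in A$ with $f(P_j)=0$ for all $j\in A\setminus\{i\}$ but $f(P_i)\neq 0$. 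Thus $A$ is \emph{not} a stopping set iff such a pair $(f,i)$ exists, and $A$ \emph{is} a stopping set iff for every $i\in A$, every $f\in\mathscr{L}(G)$ vanishing on all $P_j$ with $j\in A\setminus\{i\}$ must also vanish at $P_i$.

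The key translation is the elementary observation that, since $\mathrm{Supp}(G)\cap D=\emptyset$, a function $f\in\mathscr{L}(G)$ vanishes at a set of points $\{P_j:j\in B\}$ (all distinct, all outside $\mathrm{Supp}(G)$) if and only if $f\in\mathscr{L}(G-\sum_{j\in B}P_j)$. Applying this with $B=A\setminus\{i\}$ and with $B=A$, the condition ``every $f\in\mathscr{L}(G-\sum_{j\in A\setminus\{i\}}P_j)$ also vanishes at $P_i$'' becomes exactly the inclusion $\mathscr{L}(G-\sum_{j\in A\setminus\{i\}}P_j)\subseteq \mathscr{L}(G-\sum_{j\in A}P_j)$. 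The reverse inclusion $\mathscr{L}(G-\sum_{j\in A}P_j)\subseteq\mathscr{L}(G-\sum_{j\in A\setminus\{i\}}P_j)$ holds automatically, since $G-\sum_{j\in A}P_j\leqslant G-\sum_{j\in A\setminus\{i\}}P_j$. Hence $A$ is a stopping set iff $\mathscr{L}(G-\sum_{j\in A}P_j)=\mathscr{L}(G-\sum_{j\in A\setminus\{i\}}P_j)$ for every $i\in A$. Taking the union over $i\in A$ of the (nested) spaces $\mathscr{L}(G-\sum_{j\in A\setminus\{i\}}P_j)$, each of which contains $\mathscr{L}(G-\sum_{j\in A}P_j)$, this is equivalent to the single identity
\[
   \mathscr{L}\Bigl(G-\sum_{j\in A}P_j\Bigr)=\bigcup_{i\in A}\mathscr{L}\Bigl(G-\sum_{j\in A\setminus\{i\}}P_j\Bigr),
\]
which is the claimed statement. (Note the union on the right is a union of vector spaces, not itself a vector space in general; the content of the equality is precisely that each term of the union equals the left side.)

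The only subtlety, and the step I would be most careful about, is the passage from ``$H^{*}$ has a row of weight $1$ on $A$'' to the function‐theoretic statement: one must use that distinct points $P_i\in D$ give linearly independent evaluation functionals on $\mathscr{L}(G)$ in the relevant sense, and that prescribing $f(P_j)=0$ for $j\in B$ is the same as the divisor inequality $\mathrm{div}(f)+G\geqslant \sum_{j\in B}P_j$ — this is where $\mathrm{Supp}(G)\cap D=\emptyset$ and the distinctness of the $P_i$ are essential (otherwise a zero of $f$ at $P_j$ could be ``absorbed'' by a pole of $G$ there, or multiplicities could interfere). Once that dictionary is in place, the argument is the short chain of inclusions above; the dual‐code description of $H^{*}$ together with the standard identification $C_{\Omega}(D,G)^{\bot}=C_{\mathscr{L}}(D,G)$ from \cite[Proposition~8.1.2]{Stichtenoth} does the rest.
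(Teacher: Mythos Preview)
Your argument is correct and follows essentially the same route as the paper: unwind the definition of a stopping set via the rows $ev(f)$ of $H^{*}$, translate ``$f$ vanishes on $\{P_j:j\in B\}$'' into $f\in\mathscr{L}(G-\sum_{j\in B}P_j)$ using $\mathrm{Supp}(G)\cap D=\emptyset$, and then observe that the resulting family of equalities (one for each $i\in A$) is equivalent to the single union identity because each $\mathscr{L}(G-\sum_{j\in A\setminus\{i\}}P_j)$ already contains $\mathscr{L}(G-\sum_{j\in A}P_j)$. Your extra remarks on why the support condition and distinctness of the $P_i$ matter are accurate and simply make explicit what the paper leaves implicit.
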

\begin{proof}
By the definition, $A\subseteq [n]$ is  not a stopping set of $C_{\Omega}(D, G)$ if and
only if there is some $f\in \mathscr{L}(G)$ such that
\[
      ev(f)|_{A}=(f(P_{i}))_{i\in A}
\]
has weight $1$. That is, there is some $i\in A$ such that
\[
    f(P_{i})\neq 0  \qquad \mbox{and}\qquad f(P_{j})= 0 \,\,
    \mbox{for all}\, j\in A\setminus\{i\}\ .
\]
This is equivalent to saying that
\[
   f\in \mathscr{L}(G-\sum_{j\in A\setminus\{i\}}P_{j})\setminus \mathscr{L}(G-\sum_{j\in
   A}P_{j})\ .
\]
So $A$ is a stopping set if and only if for any $i\in A$,
\[
      \mathscr{L}(G-\sum_{j\in A}P_{j})=\mathscr{L}(G-\sum_{j\in A\setminus\{i\}}P_{j})\ .
\]
 Since $\mathscr{L}(G-\sum_{j\in A}P_{j})\subseteq
\mathscr{L}(G-\sum_{j\in A\setminus\{i\}}P_{j})$ for any $i\in A$,
we have
\begin{displaymath}
\begin{array}{cl}
   & \mathscr{L}(G-\sum_{j\in A}P_{j})=\bigcup_{i\in
      A}\mathscr{L}(G-\sum_{j\in A\setminus\{i\}}P_{j}) \\
 \Longleftrightarrow & \mathscr{L}(G-\sum_{j\in A}P_{j})=
\mathscr{L}(G-\sum_{j\in A\setminus\{i\}}P_{j})\ \mbox{for any $i\in A$.}
\end{array}
\end{displaymath}
 So the theorem holds.
\end{proof}
As a simple corollary, we obtain
\begin{cor}\label{cor}
 \emph{(i)} Any subset of $[n]$ with cardinality
$\geqslant m+2$ is a stopping set of $C_{\Omega}(D, G)$.

\emph{(ii)} Any non-empty subset of $[n]$ with cardinality $\leqslant m-2g+1$
is not a stopping set of $C_{\Omega}(D, G)$.
\end{cor}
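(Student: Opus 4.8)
The plan is to deduce both statements directly from Theorem~\ref{ss} together with the Riemann--Roch theorem, reading off everything from the degrees of the divisors $G-\sum_{j\in A}P_j$ and $G-\sum_{j\in A\setminus\{i\}}P_j$. Since $\mathrm{Supp}(G)\cap D=\emptyset$, these divisors are well defined for every $A\subseteq[n]$ and $i\in A$, with degrees $m-\#A$ and $m-\#A+1$ respectively. I will use the two standard consequences of Riemann--Roch that $\dim\mathscr{L}(V)=0$ when $\deg V<0$ and $\dim\mathscr{L}(V)=\deg V-g+1$ when $\deg V>2g-2$.

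For part~(i), if $\#A\geqslant m+2$ then $m-\#A\leqslant-2$ and $m-\#A+1\leqslant-1$, so both $\mathscr{L}(G-\sum_{j\in A}P_j)$ and every $\mathscr{L}(G-\sum_{j\in A\setminus\{i\}}P_j)$ are the zero space. Hence both sides of the identity in Theorem~\ref{ss} equal $\{0\}$ and $A$ is a stopping set.

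For part~(ii), suppose $A\neq\emptyset$ with $\#A\leqslant m-2g+1$ and fix any $i\in A$. Then $m-\#A\geqslant 2g-1>2g-2$ and $m-\#A+1\geqslant 2g>2g-2$, so Riemann--Roch applies to both divisors and yields
\[
\dim\mathscr{L}\Bigl(G-\sum_{j\in A\setminus\{i\}}P_j\Bigr)=\dim\mathscr{L}\Bigl(G-\sum_{j\in A}P_j\Bigr)+1 .
\]
Because $\mathscr{L}(G-\sum_{j\in A}P_j)\subseteq\mathscr{L}(G-\sum_{j\in A\setminus\{i\}}P_j)$, this inclusion is strict, so the equality in Theorem~\ref{ss} fails for this particular $i$, and therefore $A$ is not a stopping set. (If $m=2g-1$ the hypothesis forces $A=\emptyset$ and the claim is vacuous.)

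I do not expect any genuine obstacle here: the only thing requiring care is that the relevant divisor degrees land in the required ranges ($<0$ in part~(i), $>2g-2$ in part~(ii)) for \emph{every} admissible $A$ and $i$, and this is immediate from the cardinality bounds $\#A\geqslant m+2$ and $1\leqslant\#A\leqslant m-2g+1$ together with $2g-2<m$. It is also worth recalling explicitly that Theorem~\ref{ss} has already reduced the notion of stopping set to the equality $\mathscr{L}(G-\sum_{j\in A}P_j)=\bigcup_{i\in A}\mathscr{L}(G-\sum_{j\in A\setminus\{i\}}P_j)$, so in part~(ii) exhibiting one index $i$ with a strict inclusion is enough.
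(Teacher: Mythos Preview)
Your proof is correct and follows essentially the same approach as the paper: both parts are read off from Theorem~\ref{ss} together with the degree bounds, using $\mathscr{L}(V)=\{0\}$ for $\deg V<0$ in part~(i) and the exact Riemann--Roch dimension formula for $\deg V>2g-2$ in part~(ii). Your added remarks (the vacuous case $m=2g-1$, and that a single index with strict inclusion suffices) are fine but not needed, since the degree computation in (ii) in fact works for every $i\in A$.
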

\begin{proof} (i) For any subset  $A\subseteq [n]$ with cardinality $\geqslant
m+2$, divisors $G-\sum_{j\in A\setminus\{i\}}P_{j}$ and
$G-\sum_{j\in A}P_{j}$ are negative. So
\[
  \mathscr{L}(G-\sum_{j\in A}P_{j})=\bigcup_{i\in
      A}\mathscr{L}(G-\sum_{j\in A\setminus\{i\}}P_{j})=\{0\}\ .
\]
It follows from Theorem~\ref{ss} that $A$ is a stopping set.

(ii) For any non-empty subset $A\subseteq [n]$ with cardinality
$\leqslant m-2g+1$, by the Riemann-Roch theorem we have

\begin{displaymath}
\begin{array}{rcl}
  \dim(\mathscr{L}(G-\sum_{j\in A}P_{j})) & = & m-\#A-g+1, \\
  \dim(\mathscr{L}(G-\sum_{j\in A\setminus\{i\}}P_{j})) & = &
  m-\#A-g+2\ .
\end{array}
\end{displaymath}
So
\[
\mathscr{L}(G-\sum_{j\in A}P_{j})\varsubsetneq
\mathscr{L}(G-\sum_{j\in A\setminus\{i\}}P_{j})\
\]
for all $i\in A$.
It follows from Theorem~\ref{ss} that $A$  is not a stopping
set. Note that one can also give another proof of (ii) from Proposition~\ref{stopdist}, since the minimum distance of $C_{\Omega}(D, G)$
is at least $m-2g+2$.
\end{proof}

If we represent the generalized Reed-Solomon codes as AG codes from the rational function field, then by Corollary~\ref{cor}, we also obtain Proposition~\ref{rs} for the generalized Reed-Solomon codes.

Using the Riemann-Roch theorem, we give another description of
stopping sets of AG codes $C_{\Omega}(D, G)$.

\begin{thm}\label{equi}
A subset $A\subseteq [n]$ is a stopping set of $C_{\Omega}(D, G)$ if
and only if for any $i\in A$, there exists an effective divisor
$E_{i}$ with $P_{i}\notin \mathrm{Supp}(E_{i})$ such that
\[
    K-G+\sum_{j\in A}P_{j}\thicksim E_{i}\ ,
\]
where $K$ is a canonical divisor on $X$ and $\thicksim$ means that two
divisors are linearly equivalent, i.e., the difference between the two divisors is a principal divisor.
\end{thm}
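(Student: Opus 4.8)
The plan is to leverage Theorem~\ref{ss} together with the Riemann-Roch duality $\mathscr{L}(G - \sum_{j\in A}P_j) \cong \Omega((\sum_{j\in A}P_j) - G)^{\vee}$ and, more concretely, the standard isomorphism $\mathscr{L}(V) \cong \Omega(K - V)$ for any divisor $V$ (where $\Omega(W)$ is the space of Weil differentials $\omega$ with $\mathrm{div}(\omega) \geqslant W$), valid because $\mathrm{div}(\omega f) = \mathrm{div}(\omega) + \mathrm{div}(f)$ and $\mathrm{div}(\omega_0) = K$ for a fixed differential $\omega_0$. First I would record the well-known dictionary: for a divisor $V$ on $X$, $\dim \mathscr{L}(V) > 0$ if and only if $V$ is linearly equivalent to an effective divisor, and more precisely the nonzero elements of $\mathscr{L}(V)$ correspond bijectively (up to scalars) to the effective divisors $E \thicksim V$. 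Applying this with $V = K - G + \sum_{j\in A}P_j$ rewrites everything in terms of effective divisors in the linear system $|K - G + \sum_{j\in A}P_j|$.

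Next I would unwind the stopping-set condition. By Theorem~\ref{ss}, $A$ is \emph{not} a stopping set if and only if there exist $i\in A$ and $f \in \mathscr{L}(G - \sum_{j\in A\setminus\{i\}}P_j) \setminus \mathscr{L}(G - \sum_{j\in A}P_j)$; such an $f$ is a nonzero rational function whose divisor satisfies $\mathrm{div}(f) \geqslant -(G - \sum_{j\in A\setminus\{i\}}P_j)$ but \emph{not} $\mathrm{div}(f) \geqslant -(G - \sum_{j\in A}P_j)$, i.e. the coefficient of $P_i$ in $\mathrm{div}(f) + G - \sum_{j\in A\setminus\{i\}}P_j$ is exactly $0$. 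Translating via the isomorphism $\mathscr{L}(V) \cong \Omega(K-V)$: such an $f$ exists precisely when there is an effective divisor $E_i \thicksim K - G + \sum_{j\in A}P_j$ with $P_i \notin \mathrm{Supp}(E_i)$ — indeed $E_i = \mathrm{div}(f) + K - G + \sum_{j\in A}P_j$ is effective (from $f \in \mathscr{L}(G - \sum_{j\in A\setminus\{i\}}P_j) \subseteq \mathscr{L}(G - \sum_{j\in A}P_j + P_i)$ one gets $\mathrm{div}(f) \geqslant -(G - \sum_{j\in A}P_j + P_i)$, hence $E_i \geqslant P_i - P_i = 0$... ) and the failure of the stronger containment is exactly the statement that the coefficient of $P_i$ in $E_i$ is zero. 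Thus: $A$ is not a stopping set $\iff$ there is some $i \in A$ and some effective $E_i \thicksim K - G + \sum_{j\in A}P_j$ with $P_i \notin \mathrm{Supp}(E_i)$. Negating, $A$ \emph{is} a stopping set $\iff$ for every $i \in A$, every effective divisor linearly equivalent to $K - G + \sum_{j\in A}P_j$ contains $P_i$ in its support — equivalently, $K - G + \sum_{j\in A}P_j - P_i$ is again linearly equivalent to an effective divisor.

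Here I must reconcile this last phrasing with the statement as written, which asserts the \emph{existence} of an effective $E_i$ with $P_i \notin \mathrm{Supp}(E_i)$ — the opposite quantifier pattern. The resolution, which I would spell out carefully, is that $\mathscr{L}(G - \sum_{j\in A\setminus\{i\}}P_j) = \mathscr{L}(G - \sum_{j\in A}P_j)$ (the stopping condition from Theorem~\ref{ss}) translates under the duality into $\Omega(K - G + \sum_{j\in A}P_j - P_i) = \Omega(K - G + \sum_{j\in A}P_j)$, and I would show this equality is equivalent to the stated condition by arguing that when these spaces coincide, \emph{every} nonzero differential $\omega$ in the common space has $\mathrm{div}(\omega) \geqslant K - G + \sum_{j\in A}P_j$ with coefficient of $P_i$ strictly positive, whence $E_i := \mathrm{div}(\omega) \cdot \omega_0^{-1}$-type construction... — more cleanly, picking any $f \in \mathscr{L}(G - \sum_{j\in A}P_j)$ corresponding to an effective $E = \mathrm{div}(f) + K - G + \sum_{j\in A}P_j$, the equality of $\mathscr L$-spaces forces this $E$ to \emph{not} charge $P_i$ beyond what is forced, and in fact yields the desired $E_i$ with $P_i \notin \mathrm{Supp}(E_i)$ for \emph{that} $f$, giving existence. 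The main obstacle, then, is purely bookkeeping: keeping the quantifiers ("for all $i$" outside, "there exists $E_i$" inside) straight through the duality and making sure the base point condition at $P_i$ is tracked correctly when $P_i$ may or may not already appear with positive coefficient in $G - \sum_{j\in A}P_j$ — but since $\mathrm{Supp}(G) \cap D = \emptyset$, the coefficient of $P_i$ in $G - \sum_{j\in A}P_j$ is $-1$, which streamlines this. With the dictionary in place the argument is then a direct translation of Theorem~\ref{ss}.
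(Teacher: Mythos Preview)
Your proposal has a genuine gap in the central translation step. In the second paragraph you take $f\in\mathscr{L}(G-\sum_{j\in A\setminus\{i\}}P_j)$ and set $E_i=\mathrm{div}(f)+K-G+\sum_{j\in A}P_j$, claiming this is effective. But $f\in\mathscr{L}(G-\sum_{j\in A\setminus\{i\}}P_j)$ only gives $\mathrm{div}(f)\geqslant -(G-\sum_{j\in A\setminus\{i\}}P_j)$, so what is effective is $\mathrm{div}(f)+G-\sum_{j\in A\setminus\{i\}}P_j$, a divisor in the linear system $|G-\sum_{j\in A\setminus\{i\}}P_j|$, not in $|K-G+\sum_{j\in A}P_j|$. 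Your ``hence $E_i\geqslant P_i-P_i=0$'' is a non sequitur: the two divisors differ by $K-2G+2\sum_{j\in A}P_j-P_i$, which has no reason to be effective. This is exactly why you then find the quantifiers reversed from the statement.

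The isomorphism $\mathscr{L}(V)\cong\Omega(K-V)$ that you invoke is not the right tool: it preserves equalities and strict inclusions in the \emph{same} direction, so it cannot by itself convert the stopping condition (an equality of $\mathscr{L}$-spaces on the $G$ side) into the existence statement on the $K-G$ side. What is needed is the Riemann--Roch \emph{dimension formula} $\dim\mathscr{L}(V)=\deg V-g+1+\dim\mathscr{L}(K-V)$. Applying it to $V=G-\sum_{j\in A}P_j$ and to $V=G-\sum_{j\in A\setminus\{i\}}P_j$ and subtracting shows that
\[
\dim\mathscr{L}(G-\textstyle\sum_{j\in A}P_j)=\dim\mathscr{L}(G-\textstyle\sum_{j\in A\setminus\{i\}}P_j)
\ \Longleftrightarrow\
\dim\mathscr{L}(K-G+\textstyle\sum_{j\in A}P_j)=\dim\mathscr{L}(K-G+\textstyle\sum_{j\in A\setminus\{i\}}P_j)+1,
\]
i.e.\ the inclusion on the $K-G$ side is \emph{strict}. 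Now pick $f$ in the difference $\mathscr{L}(K-G+\sum_{j\in A}P_j)\setminus\mathscr{L}(K-G+\sum_{j\in A\setminus\{i\}}P_j)$ and set $E_i=\mathrm{div}(f)+K-G+\sum_{j\in A}P_j$; this $E_i$ is effective with $P_i\notin\mathrm{Supp}(E_i)$ (using $\mathrm{Supp}(G)\cap D=\emptyset$ so the $P_i$-coefficient works out to exactly $0$), and the converse is equally direct. This is the paper's argument; once you insert the dimension formula in place of the space isomorphism, your bookkeeping difficulties disappear.
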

\begin{proof}
From the proof of Theorem~\ref{ss}, a subset $A\subseteq [n]$ is a stopping set
if and only if for any $i\in A$,
$$\dim\mathscr{L}(G-\sum_{j\in
A}P_{j})=\dim\mathscr{L}(G-\sum_{j\in A\setminus\{i\}}P_{j})\ .$$

The Riemann-Roch theorem states that for any divisor $V$, we have
\[
    \dim\mathscr{L}(V)=\mathrm{deg}(V)-g+1+\dim\mathscr{L}(K-V)\ .
\]
So a subset $A\subseteq [n]$ is a stopping set if and only if for
any $i\in A$,
\[
\dim\mathscr{L}(K-G+\sum_{j\in
A}P_{j})=\dim\mathscr{L}(K-G+\sum_{j\in A\setminus\{i\}}P_{j})+1\ .
\]

It is equivalent to that for any $i\in A$, there exists
\[
     f\in \mathscr{L}(K-G+\sum_{j\in A}P_{j})\setminus\mathscr{L}(K-G+\sum_{j\in
A\setminus\{i\}}P_{j})\ .
\]
The last statement is equivalent to that for any $i\in A$, there
exists an effective divisor $E_{i}$ with $P_{i}\notin
\mathrm{Supp}(E_{i})
$ such that
\[
    K-G+\sum_{j\in A}P_{j}\thicksim E_{i}\ .
\]
Indeed, $E_{i}=\mathrm{div}(f)+K-G+\sum_{j\in A}P_{j}$.
\end{proof}
By Theorem~\ref{equi}, we immediately have a sufficient condition for a subset to be a stopping set.
\begin{cor}\label{suff}
Keep notation as above. Let $A$ be a subset of $[n]$. If $
K-G+\sum_{j\in A}P_{j}\thicksim E$ for some effective divisor $E$
whose support has no intersection with $\{P_{i}\,|\,i\in A\}$, then
$A$ is a stopping set.
\end{cor}

\section{Stopping Sets and Stopping set Distributions of AG Codes from Elliptic Curves}
In the previous section, for the general AG code $C_{\Omega}(D, G)$, we have seen that there is a gap, $\deg(G)-2g+2\leqslant i\leqslant \deg(G)+1$, where in general we
have not determined whether a subset with cardinality $i$ is a stopping set
or not. In this section, we consider a class of special AG codes, AG codes constructed from elliptic curves.

Let $X=E$ be an elliptic curve over the finite field $\f$ with a rational point $O$. Endow $E(\f)$ a group structure with the zero element $O$. Let
$D=\{P_{1},P_{2},\cdots,P_{n}\}$ be a subset of the set $E(\f)$ such that $O\notin D$. let
$G=mO$ ($0<m<n$).

In general, if $G$ is a divisor of degree $m$ on $E$, then for any rational point $Q\in E(\f)$, as $\deg(G-(m-1)Q)=1$, by the Riemann-Roch theorem, there exists
one and only one rational point $P\in E(\f)$ such that $G\sim (m-1)Q+P$. Suppose there exist rational points $Q,P$ such that
$G\sim (m-1)Q+P$ and $P,Q\notin D$. Let $G'=(m-1)Q+P$.
Then the codes $C_{\Omega}(D, G)$ and $C_{\Omega}(D, G')$ are
equivalent \cite[Proposition~2.2.14]{Stichtenoth}. And the dual codes $C_{\mathscr{L}}(D, G)$ and $C_{\mathscr{L}}(D,
G')$ are also equivalent. Here two linear codes $C_{1},C_{2}\subseteq \f^n$ are said to be
\emph{equivalent} if there is a vector $a=(a_{1},\cdots,a_{n})\in
(\f^*)^n$ such that
\[
  C_{2}=a\cdot C_{1}=\{(a_{1}c_{1},\cdots,a_{n}c_{n})\,|\,(c_{1},\cdots,c_{n})\in
  C_{1}\}\ .
\]
It is easy to see that two equivalent codes have the same stopping
sets and hence the same stopping set distributions. So to study the stopping sets and the stopping set distribution of
$C_{\Omega}(D, G)$, it suffices
to determine all the stopping sets and the stopping set distribution of
$C_{\Omega}(D, (m-1)Q+P)$. In this case, we use $Q$ to define the group $E(\f)$ with the zero element $Q$. Then all results in this paper hold
similarly for $C_{\Omega}(D, G)$ with $G\sim (m-1)Q+P$ such that $P,Q\notin D$.

Note that $g=1$ for elliptic curves. According to Corollary~\ref{cor}, any subset of $[n]$ with cardinality $\geqslant m+2$ is
a stopping set and any non-empty subset of $[n]$ with cardinality $\leqslant m-1$
is not a stopping set. So it is enough to consider the
subsets of $[n]$ with cardinality $m$ and $m+1$.
Below we use the group
$E(\f)$ \cite{DBLP:journals/jct/Schoof87,Silverman} to give a
description of these two classes of stopping sets with cardinality $m$ and $m+1$, respectively.

(i) Suppose $A\subseteq [n]$ with cardinality $m+1$ is not a
stopping set. Then there are some $i\in A$ and $f\in \mathscr{L}(G)$
such that
\[
 f\in \mathscr{L}(G-\sum_{j\in A\setminus\{i\}}P_{j})\setminus \mathscr{L}(G-\sum_{j\in
   A}P_{j})\ .
\]
Note that
\[
   \mathrm{deg}(G-\sum_{j\in A\setminus\{i\}}P_{j})=m-m=0\ ,
\]
and
\[
   \mathrm{div}(f)\geqslant -G+\sum_{j\in A\setminus\{i\}}P_{j}\ .
\]
Since both sides have degree zero,
so
\[
   \mathrm{div}(f)= -G+\sum_{j\in A\setminus\{i\}}P_{j}=\sum_{j\in A\setminus\{i\}}(P_{j}-O)\ .
\]
In this case,  $A\subseteq [n]$, $\#A=m+1$, is not a stopping set if
and only if there exists some $i\in A$ such that the sum $\sum_{j\in
A\setminus\{i\}}P_{j}$ in the group $E(\f)$ is $O$.

(ii) Suppose $A\subseteq [n]$ with cardinality $m$ is a
stopping set. By Theorem~\ref{ss}, for any $i\in A$, we have
\[
 \mathscr{L}(G-\sum_{j\in A\setminus\{i\}}P_{j})= \mathscr{L}(G-\sum_{j\in
   A}P_{j})\ .
\]
But
\[
   \mathrm{deg}(G-\sum_{j\in A\setminus\{i\}}P_{j})=1\geqslant 2g-1=1\ ,
\]
by the Riemann-Roch theorem, there exists some $f\in
\f(E)$ such that
\[
   0\neq f\in \mathscr{L}(G-\sum_{j\in A\setminus\{i\}}P_{j})= \mathscr{L}(G-\sum_{j\in
   A}P_{j})\ .
\]
So
\[
   \mathrm{div}(f)=G-\sum_{j\in A} P_{j}=\sum_{j\in A}(O- P_{j})\ .
\]
This is equivalent to
\[
   \sum_{j\in A}P_{j}=O
\]
in the group $E(\f)$.
Conversely, let $A\subseteq [n]$ with cardinality $m$ such that
$\sum_{j\in A}P_{j}=O$. Since the zero divisor $K=0$ is a canonical
divisor for elliptic curves, we have
\[
 K-G+\sum_{j\in A} P_{j}\sim 0\ .
\]
By Corollary~\ref{suff}, $A$ is a stopping set.

 From the argument above, we obtain the following partial results of the main theorem in the introduction.

\begin{thm}\label{main} Let $E$ be an elliptic curve over the finite
field $\f$, $D=\{P_{1},P_{2},\cdots,P_{n}\}$ a  subset of $E(\f)$
such that the zero element $O\notin D$ and let $G=mO$ ($0<m<n$). The non-empty
stopping sets of the residue code $C_{\Omega}(D, G)$ are given as
follows:
\begin{description}
\item[(i)]  Any subset of $[n]$ with cardinality $\leqslant m-1$ is
not a stopping set.

\item[(ii)] Any subset of $[n]$ with cardinality $\geqslant m+2$ is
a stopping set.

\item[(iii)] $A\subseteq [n]$, $\#A=m+1$, is a stopping set if and
only if for all $i\in A$, the sum
\[
\sum_{j\in A\setminus\{i\}}P_{j}\neq O\ .
\]

\item[(iv)] $A\subseteq [n]$, $\#A=m$, is a stopping set if and only
if
\[
      \sum_{j\in A}P_{j}=O\ .
\]
\end{description}
\end{thm}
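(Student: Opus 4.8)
The plan is to prove the four items of Theorem~\ref{main} directly from the two structural descriptions already established, specializing everything to the elliptic-curve setting where $g=1$ and the zero divisor $K=0$ is a canonical divisor. Items (i) and (ii) require essentially no new work: since $g=1$, the bound $m-2g+1=m-1$ in Corollary~\ref{cor}(ii) shows any non-empty subset of size $\leqslant m-1$ is not a stopping set, and Corollary~\ref{cor}(i) directly gives that any subset of size $\geqslant m+2$ is a stopping set. So only the two ``boundary'' cardinalities $m$ and $m+1$ demand genuine argument, and both hinge on the degree-zero divisor class group of $E$ being canonically identified with $E(\f)$ via $P\mapsto [P-O]$, i.e.\ on the fact that $\sum_j P_j = O$ in the group law if and only if $\sum_j (P_j - O)$ is a principal divisor.

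For item (iv) ($\#A=m$), I would use Theorem~\ref{ss}. For $i\in A$ the divisor $G-\sum_{j\in A\setminus\{i\}}P_j$ has degree $m-(m-1)=1\geqslant 2g-1$, so by Riemann--Roch it has a one-dimensional space of sections, while $G-\sum_{j\in A}P_j$ has degree $0$. A degree-$0$ divisor has a nonzero section precisely when it is principal, i.e.\ linearly equivalent to $0$. Thus $\mathscr{L}(G-\sum_{j\in A}P_j)=\mathscr{L}(G-\sum_{j\in A\setminus\{i\}}P_j)$ for every $i\in A$ iff $G-\sum_{j\in A}P_j\sim 0$, which (using $G=mO$) says exactly $\sum_{j\in A}(P_j-O)$ is principal, i.e.\ $\sum_{j\in A}P_j=O$ in $E(\f)$. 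For the converse direction I can alternatively invoke Corollary~\ref{suff} with $K=0$: if $\sum_{j\in A}P_j=O$ then $K-G+\sum_{j\in A}P_j\sim 0$, and $0$ is an effective divisor with empty support, so $A$ is a stopping set.

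For item (iii) ($\#A=m+1$), the cleanest route is to characterize the complement: $A$ is \emph{not} a stopping set iff, by the proof of Theorem~\ref{ss}, there is some $i\in A$ and some $f\in\mathscr{L}(G-\sum_{j\in A\setminus\{i\}}P_j)\setminus\mathscr{L}(G-\sum_{j\in A}P_j)$. Here $\deg(G-\sum_{j\in A\setminus\{i\}}P_j)=m-m=0$, so $\operatorname{div}(f)\geqslant -G+\sum_{j\in A\setminus\{i\}}P_j$ forces equality of these two degree-zero divisors, giving $\operatorname{div}(f)=\sum_{j\in A\setminus\{i\}}(P_j-O)$; hence this divisor is principal, which is equivalent to $\sum_{j\in A\setminus\{i\}}P_j=O$ in $E(\f)$. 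Conversely if that sum is $O$ for some $i$, the associated function $f$ lies in $\mathscr{L}(G-\sum_{j\in A\setminus\{i\}}P_j)$ but not in $\mathscr{L}(G-\sum_{j\in A}P_j)$ (the latter has negative degree $-1$, so it is $\{0\}$), so $A$ is not a stopping set. Negating gives item (iii). I do not anticipate a serious obstacle here; the only point requiring care is the translation between ``degree-zero divisor is principal'' and ``group-law sum is $O$,'' which is the standard isomorphism $\operatorname{Pic}^0(E)\cong E(\f)$ for an elliptic curve and should be stated explicitly. The main thing to keep straight throughout is the bookkeeping on divisor degrees at the two boundary cardinalities, since it is exactly the jump in degree (from $0$ to $-1$, or from $1$ to $0$) that pins down whether sections exist.
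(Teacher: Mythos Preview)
Your proposal is correct and follows essentially the same route as the paper: items (i) and (ii) come straight from Corollary~\ref{cor} with $g=1$, and for items (iii) and (iv) you analyze the Riemann--Roch spaces at the boundary degrees $0$, $-1$, and $1$ exactly as the paper does, using Theorem~\ref{ss} (and its proof) together with the identification of principal degree-zero divisors with group-law sums equal to $O$, and invoking Corollary~\ref{suff} with $K=0$ for the converse in (iv). The only cosmetic difference is that for (iv) you phrase the argument as a dimension comparison, whereas the paper extracts an explicit nonzero $f$ from the degree-$1$ space and reads off its divisor; the content is identical.
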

Let us give an example to illustrate the theorem.

\begin{exm}\label{exm}
 Let $E$ be an elliptic curve defined over $\mathbb{F}_{5}$ by
the equation
\[
    y^2 =x^3 +x +1\ .
\]
Then $E$ has 9 rational points: the infinity point $O$ and
$P_{1}=(0, 1)$, $P_{2}=(4, 2)$, $P_{3}=(2, 1)$, $P_{4}=(3, 4)$,
$P_{5}=(3, 1)$, $P_{6}=(2, -1)$, $P_{7}=(4, -2)$, $P_{8}=(0, -1)$.
Using Group Law Algorithm 2.3 in \cite{Silverman}, one can
check that $E(\mathbb{F}_{5})$ forms a cyclic group with
$P_{i}=[i]P_{1}$. Let $D=\{P_{1},P_{2},\cdots,P_{8}\}$ and $G=3O$.

By Corollary~\ref{suff} and Theorem~\ref{main}, all nonempty stopping sets
of $C_{\Omega}(D, G)$ are given as follows:
\begin{description}
  \item[(i)] subsets of $[n]$ with cardinality $\geqslant 5$;
  \item[(ii)] \{1,2,3,7\}, \{1,2,3,8\}, \{1,2,4,5\}, \{1,2, 4,7\},
\{1,2,4,8\}, \{1,2,5,7\}, \{1,2,5,8\}, \{1,2,7,8\}, \{1,3,4,6\},
\{1,3,4,7\}, \{1,3,4,8\}, \{1,3,6,7\}, \{1,3,6,8\}, \{1,4,5,6\},
\{1,4,5,7\}, \{1,4,5,8\}, \{1,4,6,7\}, \{1,4,7,8\}, \{1,5,6,8\},
\{1,5,7,8\}, \{1,6,7,8\}, \{2,3,5,6\}, \{2,3,5,7\}, \{2,3,5,8\},
\{2,3,6,7\}, \{2,3,6,8\}, \{2,4,5,6\}, \{2,4,5,7\}, \{2,4,5,8\},
\{2,4,6,7\}, \{2,4,7,8\}, \{2,5,6,8\}, \{2,5,7,8\}, \{2,6,7,8\},
\{3,4,5,6\}, \{3,4,5,7\}, \{3,4,5,8\}, \{3,4,6,7\}, \{3,5,6,8\},
\{4,5,7,8\};
  \item[(iii)] \{1,2,6\}, \{1,3,5\}, \{2,3,4\}, \{3,7,8\}, \{4,6,8\},
\{5,6,7\}.
\end{description}

So the stopping set distribution of $C_{\Omega}(D, G)$ with the
parity-check matrix $H^{*}$ is

\begin{displaymath}
T_{i}(H^{*}) = \left\{
\begin{array}{cl}
  1, & \mbox{if $i=0$,} \\
  6, & \mbox{if $i=3$,} \\
  40, & \mbox{if $i=4$,} \\
  \binom{8}{i}, & \mbox{if $i\geqslant 5$,}\\
  0, & \mbox{otherwise}.
\end{array}\right.
\end{displaymath}
Also, the minimum distance of the code $C_{\Omega}(D, G)$
is 3 by Proposition~\ref{stopdist}.
\end{exm}
Theorem~\ref{main} describes all the stopping sets of residue AG
codes from elliptic curves. Next, we establish the relationship
between the set of stopping sets with cardinality $m$ and the set of
stopping sets with cardinality $m+1$.

Denote by $S(m)$ and $S(m+1)$ the two sets of stopping
sets with cardinality $m$ and $m+1$  in the cases (iv) and
(iii) in Theorem~\ref{main}, respectively.
Let $S^{+}(m)$ be
the extended set of $S(m)$ defined as follows
\[
  S^{+}(m)=\bigcup_{A\in S(m)}\{A\cup\{i\}: i\in [n]\setminus A\}\ .
\]
\begin{thm}\label{count}
 Notation as above. We have
\[
   S(m+1)\cap S^{+}(m)=\emptyset\ ,
\]
and
\[
     S(m+1)=\{\mbox{all subsets of $[n]$ with cardinality
     $m+1$}\}\setminus S^{+}(m)\ .
\]
Moreover, the union in the definition of $S^{+}(m)$ is a
disjoint union. Hence
\begin{equation*}
    \begin{array}{ccl}
       \#S(m+1) & = & \binom{n}{m+1}-\#S^{+}(m) \\
        & = & \binom{n}{m+1}-(n-m)\#S(m)\ .
     \end{array}
\end{equation*}
\end{thm}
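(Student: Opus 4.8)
The plan is to prove the three assertions of Theorem~\ref{count} in sequence, relying on the characterizations of $S(m)$ and $S(m+1)$ from Theorem~\ref{main}(iii)--(iv).

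First I would establish $S(m+1)\cap S^{+}(m)=\emptyset$. Take any $B\in S^{+}(m)$, so $B=A\cup\{i\}$ with $A\in S(m)$ and $i\in[n]\setminus A$. By Theorem~\ref{main}(iv), $\sum_{j\in A}P_j=O$; but $A=B\setminus\{i\}$, so $\sum_{j\in B\setminus\{i\}}P_j=O$, which by Theorem~\ref{main}(iii) says precisely that $B$ is \emph{not} a stopping set, i.e.\ $B\notin S(m+1)$. Next, the reverse description $S(m+1)=\{\text{all }(m+1)\text{-subsets}\}\setminus S^{+}(m)$: since $S^{+}(m)$ consists of $(m+1)$-subsets, it suffices to show that an $(m+1)$-subset $B$ fails to be a stopping set if and only if $B\in S^{+}(m)$. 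If $B\notin S(m+1)$, then by Theorem~\ref{main}(iii) there is some $i\in B$ with $\sum_{j\in B\setminus\{i\}}P_j=O$; then $A:=B\setminus\{i\}$ has cardinality $m$ and satisfies $\sum_{j\in A}P_j=O$, so $A\in S(m)$ by Theorem~\ref{main}(iv), and $B=A\cup\{i\}\in S^{+}(m)$. The converse implication was just proved in the first part. Combining, $S(m+1)$ is the complement of $S^{+}(m)$ inside the $(m+1)$-subsets.

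For the disjointness of the union defining $S^{+}(m)$, I would show that if $A_1,A_2\in S(m)$ are distinct, then $\{A_1\cup\{i\}:i\notin A_1\}$ and $\{A_2\cup\{j\}:j\notin A_2\}$ are disjoint. Suppose not: then some $(m+1)$-set $B$ equals $A_1\cup\{i\}=A_2\cup\{j\}$ with $i\neq j$ (they must be distinct since $A_1\neq A_2$). Then $A_1=B\setminus\{i\}$ and $A_2=B\setminus\{j\}$ are two distinct $m$-subsets of $B$, both in $S(m)$, so by Theorem~\ref{main}(iv) we get $\sum_{k\in B\setminus\{i\}}P_k=O$ and $\sum_{k\in B\setminus\{j\}}P_k=O$. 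Subtracting these two relations in the group $E(\f)$ yields $P_j=P_i$; since the $P_\ell$ are distinct points of $E(\f)$ this forces $i=j$, a contradiction. Hence no $(m+1)$-set arises from two different members of $S(m)$, so the union is disjoint, and each $A\in S(m)$ contributes exactly $n-m$ sets (one for each $i\in[n]\setminus A$). Therefore $\#S^{+}(m)=(n-m)\#S(m)$, and the counting formula $\#S(m+1)=\binom{n}{m+1}-(n-m)\#S(m)$ follows from the complement description.

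The only subtle point is the disjointness argument: one must use that distinct indices label distinct rational points so that $P_i=P_j$ implies $i=j$ --- this is where the hypothesis $D\subseteq E(\f)$ being a genuine subset (with distinct points) is silently used. Everything else is a direct translation of the group-theoretic conditions in Theorem~\ref{main}. I expect no real obstacle; the proof is essentially bookkeeping once the equivalences of Theorem~\ref{main} are in hand.
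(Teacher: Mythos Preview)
Your proposal is correct and follows essentially the same approach as the paper's own proof: both arguments use Theorem~\ref{main}(iii)--(iv) to show that $S^{+}(m)$ is exactly the set of non-stopping $(m+1)$-subsets, and both prove disjointness of the union by noting that if $A_1\cup\{i\}=A_2\cup\{j\}$ with $A_1,A_2\in S(m)$, then subtracting the two zero-sum relations forces $P_i=P_j$ and hence $i=j$. Your explicit remark that distinctness of the $P_\ell$ is needed here is a nice touch that the paper leaves implicit.
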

\begin{proof}
First, $S(m+1)\cap S^{+}(m)=\emptyset$  is obvious by parts (iii) and (iv) of Theorem~\ref{main}. So
\[
     S(m+1)\subseteq \{\mbox{all subsets of $[n]$ with cardinality
     $m+1$}\}\setminus S^{+}(m).
\]
On the other hand, for any $A\notin S(m+1)$ and $\#A=m+1$, by
Theorem~\ref{main} (iii), there is some $i\in A$ such that $\sum_{j\in
A\setminus\{i\}}P_{j}= O$. By Theorem~\ref{main} (iv), $A\setminus\{i\}\in S(m)$. So
$$A=(A\setminus\{i\})\cup\{i\}\in
S^{+}(m)\ .
$$
 Hence
\[
     S(m+1)=\{\mbox{all subsets of $[n]$ with cardinality
     $m+1$}\}\setminus S^{+}(m)\ .
\]
If there exist $A\in S(m)$, $A^{\prime}\in S(m)$, $i\notin A$ and
$i^{\prime}\notin A^{\prime}$ such that
\[
  A\cup \{i\}=A^{\prime}\cup \{i^{\prime}\}\in S^{+}(m)\ .
\]
Then we have $i\in A^{\prime}$, $i^{\prime}\in A$ and $A\setminus
\{i^{\prime}\}=A^{\prime}\setminus \{i\}$.

Since
\[
  \sum_{j\in A}P_{j}=\sum_{j\in A^{\prime}}P_{j}=O
\]
we get $P_{i}=P_{i^{\prime}}$. So
\[
  A=A^{\prime},\qquad i=i^{\prime}\ .
\]
That is, the union in the definition of $S^{+}(m)$ is a
disjoint union. And the formula
\begin{equation*}
    \begin{array}{ccl}
       \#S(m+1) & = & \binom{n}{m+1}-\#S^{+}(m) \\
        & = & \binom{n}{m+1}-(n-m)\#S(m)\
     \end{array}
\end{equation*}
follows immediately.
\end{proof}
\begin{rem}
The above theorem shows how we can get $S(m+1)$ from $S(m)$.
Conversely, if we know $S(m+1)$, then by the above theorem, we
can exclude $S(m+1)$ from the set of all subsets of
$[n]$ with $m+1$ elements to get $S^{+}(m)$. For any $I\in S^{+}(m)$, we calculate
$\sum_{i\in I}P_{i}$. Then by Theorem~\ref{main} (iv), there is some index $j(I)\in I$ such that
\[
\sum_{i\in I}P_{i}=P_{j(I)}\ .
\]
By the definitions of $S(m)$ and $S^{+}(m)$, we have
 $$
 S(m)=\{I\setminus \{j(I)\}\,|\,I\in S^{+}(m)\}\ .
 $$
\end{rem}
In the above example, by Theorem~\ref{main} (iv), $S(3)$
consists of all the subsets of $[8]$ whose sums have $9$ as a
divisor. Then by Theorem~\ref{count}, $S(4)$ follows immediately from
$S(3)$.

The following corollary follows immediately from Proposition~\ref{stopdist}, Theorems~\ref{main} and~\ref{count}.
\begin{cor}
Notation as above. The minimum distance and the stopping distance of the residue AG code $C_{\Omega}(D, G)$ is
$\mathrm{deg}(G)$ or $\mathrm{deg}(G)+1$. More explicitly,
if $\#S(m)>0$, then we have the stopping distance
$$s(C_{\Omega}(D,G))=d(C_{\Omega}(D,G))=m=\mathrm{deg}(G)\ .
$$
If $\#S(m)=0$, then we have
$\#S(m+1)>0$ and hence
$$s(C_{\Omega}(D, G))=d(C_{\Omega}(D,G))=m+1=\mathrm{deg}(G)+1\ .
$$
\end{cor}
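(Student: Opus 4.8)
The plan is to observe that, since we always use the parity-check matrix $H^\ast$, Proposition~\ref{stopdist} already identifies the stopping distance with the minimum distance, $s(C_{\Omega}(D,G))=d(C_{\Omega}(D,G))$; so the whole statement reduces to computing $s:=s(C_{\Omega}(D,G))$ and showing it equals $m=\mathrm{deg}(G)$ when $\#S(m)>0$ and $m+1$ otherwise. Everything else is then automatic.

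First I would extract the lower bound from Theorem~\ref{main}(i): no nonempty subset of $[n]$ of size $\leqslant m-1$ is a stopping set, so every nonempty stopping set has size $\geqslant m$, i.e. $s\geqslant m$. Next, if $\#S(m)>0$, then by the definition of $S(m)$ (Theorem~\ref{main}(iv)) there is a stopping set of size exactly $m$, forcing $s=m$; combined with Proposition~\ref{stopdist} this gives $s=d=m=\mathrm{deg}(G)$.

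It remains to treat $\#S(m)=0$. Here I would invoke the counting identity of Theorem~\ref{count}: $\#S(m+1)=\binom{n}{m+1}-(n-m)\#S(m)=\binom{n}{m+1}$. Since $0<m<n$ forces $1\leqslant m+1\leqslant n$, we get $\binom{n}{m+1}\geqslant 1$, so $\#S(m+1)>0$ and there is a stopping set of size $m+1$; as there is none of size $m$ (nor any smaller nonempty one), $s=m+1$, whence $s=d=m+1=\mathrm{deg}(G)+1$ by Proposition~\ref{stopdist}. The two cases together give the first assertion that $d=s\in\{\mathrm{deg}(G),\mathrm{deg}(G)+1\}$.

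I do not expect a genuine obstacle here — the argument is just a bookkeeping of the earlier results. The only place that warrants a line of care is the implication $\#S(m)=0\Rightarrow\#S(m+1)>0$: one must use both the exact formula from Theorem~\ref{count} and the inequality $m+1\leqslant n$ coming from the standing hypothesis $m<n$, rather than merely knowing that subsets of size $\geqslant m+2$ are stopping sets (which alone would only yield $s\leqslant m+2$).
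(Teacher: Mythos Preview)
Your proof is correct and follows exactly the route the paper indicates: the corollary is stated as an immediate consequence of Proposition~\ref{stopdist}, Theorem~\ref{main}, and Theorem~\ref{count}, and you have simply spelled out those implications. In particular, your use of the counting identity $\#S(m+1)=\binom{n}{m+1}-(n-m)\#S(m)$ together with $0<m<n$ to force $\#S(m+1)>0$ when $\#S(m)=0$ is the intended step.
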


Let $\mathcal {G}$ be an abelian group with zero element $O$ and $D$
a finite subset of $\mathcal {G}$. For an integer $0<k<|D|$ and an
element $b\in D$ we denote
\[
  N_{\mathcal {G}}(k,b,D)=\#\{S\subseteq D\,|\,\#S=k\, \mbox{ and }\, \sum_{x\in S}x=b\}\ .
\]
Computing $N_{\mathcal {G}}(k,b,D)$ is called a counting version of
the \emph{$k$-subset sum problem} ($k$-SSP). In general, a counting
$k$-SSP is \textbf{NP}-hard~\cite{Cormen}. If there is no confusion, we simply
denote
\[
    N(k,b,D)=N_{\mathcal {G}}(k,b,D)\ .
\]

\begin{rem}
By the above theorem, for a general subset $D\subseteq
E(\f)$, to decide whether $\#S(m)>0$ is the decision $m$-subset sum
problem in $E(\f)$. It is known that the decision $m$-subset sum
problem in $E(\f)$ in general is {\bf NP}-hard under {\bf RP}-reduction~\cite{chengqi}. So to compute the stopping distance of $C_{\Omega}(D, G)$
is {\bf NP}-hard under {\bf RP}-reduction.
\end{rem}

But for a subset $D\subseteq E(\f)$ with special algebraic structure,
it is possible to give an explicit formula for $\#S(m)=N(m,O,D)$, and hence explicit formulae for
$\#S(m+1)$ and the whole stopping set distribution by
Theorem~\ref{count}. In the following, we consider special subsets $D=P\setminus\{O\}$ for some subgroup $P$ of $E(\f)$. In particular, recall that $C_{\Omega}(D, G)$ is called the standard elliptic code if $D=E(\f)\setminus \{O\}$.
\begin{prop}[\cite{liwan,Klosters}]\label{klosters}
Let $\mathcal {G}$ be a finite abelian group. For $b\in \mathcal {G}$, we have
\begin{equation*}
       N(i, b, \mathcal {G}\setminus\{0\})= \frac{1}{N}\sum_{s|\exp(\mathcal{G})}(-1)^{i+\lfloor\frac{i}{s}\rfloor}\binom{N/s-1}{\lfloor i/s\rfloor}
         \cdot\sum_{d|\gcd(e(b),s)}\mu(s/d)\#\mathcal {G}[d]\ .
\end{equation*}
where $N=\#\mathcal {G}$, $\exp(\mathcal {G})$ is the exponent of $\mathcal {G}$, $e(b) = \max\{d \,|\, d|\exp(\mathcal {G}),\, b \in d\mathcal {G}\}$, $\mu$ is the M\"obius function and $\mathcal {G}[d]$ is the $d$-torsion subgroup of $\mathcal {G}$.
\end{prop}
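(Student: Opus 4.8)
The plan is to prove the formula by an additive‑character computation on $\mathcal G$ followed by a M\"obius inversion over the possible orders of characters. Fix the finite abelian group $\mathcal G$ with $N=\#\mathcal G$ and the element $b$, let $\widehat{\mathcal G}$ denote the group of $\mathbb C$‑valued characters, and recall the orthogonality relation $\frac1N\sum_{\chi\in\widehat{\mathcal G}}\bar\chi(b)\chi(a)=\delta_{a,b}$. First I would substitute this relation into the definition $N(i,b,\mathcal G\setminus\{0\})=\#\{S\subseteq\mathcal G\setminus\{0\}:\#S=i,\ \sum_{x\in S}x=b\}$ and exchange the order of summation, obtaining
\[
N(i,b,\mathcal G\setminus\{0\})=\frac1N\sum_{\chi\in\widehat{\mathcal G}}\bar\chi(b)\sum_{\substack{S\subseteq\mathcal G\setminus\{0\}\\ \#S=i}}\ \prod_{x\in S}\chi(x)=\frac1N\sum_{\chi\in\widehat{\mathcal G}}\bar\chi(b)\,[z^i]\!\!\prod_{x\in\mathcal G\setminus\{0\}}\!\!\bigl(1+z\,\chi(x)\bigr),
\]
because the inner sum over $S$ is exactly the $i$-th elementary symmetric function of the multiset $\{\chi(x):x\in\mathcal G\setminus\{0\}\}$.

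Next I would evaluate this generating function for a fixed character $\chi$ of order $s$ (necessarily $s\mid\exp(\mathcal G)$). Since $\chi\colon\mathcal G\to\mu_s$ is a surjective homomorphism onto the $s$-th roots of unity, it attains each value of $\mu_s$ exactly $N/s$ times on $\mathcal G$, hence exactly $N/s$ times on $\mathcal G\setminus\{0\}$ except for the value $1=\chi(0)$, which it attains $N/s-1$ times. Combined with the elementary identity $\prod_{\zeta\in\mu_s}(1-w\zeta)=1-w^s$ (applied with $w=-z$), this gives, after cancelling one factor $1+z$,
\[
\prod_{x\in\mathcal G\setminus\{0\}}\bigl(1+z\,\chi(x)\bigr)=\frac{\bigl(1-(-z)^s\bigr)^{N/s}}{1+z},
\]
which is a genuine polynomial of degree $N-1$ because $z=-1$ is a root of $1-(-z)^s$. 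I would then extract $[z^i]$ by multiplying the power series $\frac1{1+z}=\sum_{k\ge0}(-1)^kz^k$ against $\bigl(1-(-z)^s\bigr)^{N/s}=\sum_j\binom{N/s}{j}(-1)^{j(s+1)}z^{sj}$ and apply the partial‑sum identity $\sum_{j=0}^{r}(-1)^j\binom{M}{j}=(-1)^r\binom{M-1}{r}$ with $M=N/s$ and $r=\lfloor i/s\rfloor$, to conclude
\[
[z^i]\!\!\prod_{x\in\mathcal G\setminus\{0\}}\!\!\bigl(1+z\,\chi(x)\bigr)=(-1)^{i+\lfloor i/s\rfloor}\binom{N/s-1}{\lfloor i/s\rfloor};
\]
the trivial character is the case $s=1$, where the product is simply $(1+z)^{N-1}$ and the formula still holds.

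Finally I would sort the characters by their order and deal with the leftover character sum. Grouping the $\chi$ in the main identity according to $s=\mathrm{ord}(\chi)$, which runs over the divisors of $\exp(\mathcal G)$, reduces everything to computing $c(s):=\sum_{\mathrm{ord}(\chi)=s}\bar\chi(b)$. The characters of order dividing $s$ are exactly those trivial on $s\mathcal G$, i.e. the characters of the quotient $\mathcal G/s\mathcal G$; since $\#(\mathcal G/s\mathcal G)=\#\mathcal G[s]$, orthogonality on $\mathcal G/s\mathcal G$ gives $\sum_{d\mid s}c(d)=\#\mathcal G[s]$ if $b\in s\mathcal G$ and $0$ otherwise. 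Using the $p$-primary decomposition of $\mathcal G$ (and of $b$) one checks that, for $s\mid\exp(\mathcal G)$, one has $b\in s\mathcal G$ if and only if $s\mid e(b)$; M\"obius inversion then yields $c(s)=\sum_{d\mid\gcd(e(b),s)}\mu(s/d)\#\mathcal G[d]$. Substituting $c(s)$ and the coefficient computed above back into the character‑sum expression for $N(i,b,\mathcal G\setminus\{0\})$ produces exactly the claimed formula.

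The generating‑function bookkeeping and the binomial identities are routine; the points I expect to require care are the polynomial identity $\prod_{\zeta\in\mu_s}(1-w\zeta)=1-w^s$ together with the justification that $(1+z)$ really divides $\bigl(1-(-z)^s\bigr)^{N/s}$ (a one‑line check), and, more substantively, the structural statement that $\{\,d\mid\exp(\mathcal G):b\in d\mathcal G\,\}$ is precisely the set of divisors of $e(b)$. This last point is where the definition of $e(b)$ is genuinely used, and I would isolate it as a short preliminary lemma, proved by reducing to each Sylow component $\mathcal G_p$ of $\mathcal G$, where $s\mathcal G_p=p^{v_p(s)}\mathcal G_p$ makes the divisibility transparent.
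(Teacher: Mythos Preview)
Your argument is correct. The character--orthogonality reduction, the evaluation of the generating function $\prod_{x\in\mathcal G\setminus\{0\}}(1+z\chi(x))=(1-(-z)^s)^{N/s}/(1+z)$ for a character of order $s$, the coefficient extraction via the alternating partial sum identity, and the M\"obius inversion over character orders are all handled cleanly. The one structural point you flag---that $\{d\mid\exp(\mathcal G):b\in d\mathcal G\}$ is exactly the set of divisors of $e(b)$---is indeed the key place where the definition of $e(b)$ enters, and your $p$-primary reduction is the right way to see it.

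As for comparison with the paper: the paper does \emph{not} prove this proposition. It is quoted as a known result from the references of Li--Wan and Kosters, and is used as a black box to evaluate $\#S(m)=N(m,O,D)$ in Theorem~\ref{distribution}. Your proof is essentially the character-sum argument underlying those references (Kosters' formulation in particular), so you have supplied what the paper omits. There is nothing to correct; if anything, you could shorten the discussion of the divisibility of $(1-(-z)^s)^{N/s}$ by $1+z$, since that is immediate from the fact that the left-hand side $\prod_{x\in\mathcal G\setminus\{0\}}(1+z\chi(x))$ is manifestly a polynomial.
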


Set $\mathcal {G}=P$ a subgroup of $E(\f)$ in Proposition~\ref{klosters}. Let $N=|P|=n+1$ and $D=P\setminus\{O\}$. Then we have

\begin{thm}\label{distribution}
The number of stopping sets of $C_{\Omega}(D, mO)$ with cardinality $m$ is
\begin{equation*}
      \#S(m)= \frac{1}{N}\sum_{s|\exp(P)}(-1)^{m+\lfloor\frac{m}{s}\rfloor}\binom{N/s-1}{\lfloor m/s\rfloor} \\
         \cdot\sum_{d|s}\mu(s/d)\#P[d]\ .
\end{equation*}
\end{thm}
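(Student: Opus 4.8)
The plan is to derive Theorem~\ref{distribution} as a direct specialization of Proposition~\ref{klosters}, using the translation of stopping sets into subset-sum counting that was established earlier. By part~(iv) of Theorem~\ref{main}, a subset $A\subseteq[n]$ with $\#A=m$ is a stopping set of $C_{\Omega}(D,mO)$ if and only if $\sum_{j\in A}P_{j}=O$ in the group $E(\f)$. Since here $D=P\setminus\{O\}$ with $P$ a subgroup of $E(\f)$, the collection of size-$m$ subsets of $D$ whose elements sum to $O$ is exactly $\{S\subseteq P\setminus\{O\}\,|\,\#S=m,\ \sum_{x\in S}x=O\}$, so by definition $\#S(m)=N_{P}(m,O,P\setminus\{O\})=N(m,O,P\setminus\{O\})$.

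Next I would apply Proposition~\ref{klosters} with the finite abelian group $\mathcal{G}=P$, the target element $b=O$, and $N=\#P=|P|$. The only work is to simplify the inner divisor sum. Here $b=O=0_{P}$, and $e(O)=\max\{d\,|\,d\mid\exp(P),\ O\in dP\}$; since $O\in dP$ for every $d$ (indeed $O=d\cdot O$), we get $e(O)=\exp(P)$. Consequently $\gcd(e(O),s)=\gcd(\exp(P),s)=s$ for every $s\mid\exp(P)$, so the condition $d\mid\gcd(e(b),s)$ collapses to $d\mid s$. Substituting this into the formula of Proposition~\ref{klosters} yields exactly
\[
   \#S(m)=\frac{1}{N}\sum_{s\mid\exp(P)}(-1)^{m+\lfloor m/s\rfloor}\binom{N/s-1}{\lfloor m/s\rfloor}\sum_{d\mid s}\mu(s/d)\,\#P[d]\ ,
\]
which is the claimed identity.

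The argument is essentially a bookkeeping exercise, so I do not anticipate a genuine obstacle; the one point requiring a moment's care is the verification that $e(O)=\exp(P)$ (equivalently, that the divisibility constraint in the inner sum simplifies from $d\mid\gcd(e(b),s)$ to $d\mid s$ when $b=O$), since this is what distinguishes the general $N(i,b,\mathcal{G}\setminus\{0\})$ formula from the clean form stated in the theorem. One should also note the harmless identification $0=O$ between the additive notation of Proposition~\ref{klosters} and the group-law notation for $E(\f)$, and the fact that $N=|P|=n+1$ so that $N/s-1$ and the quantities $\binom{N/s-1}{\lfloor m/s\rfloor}$ are all well-defined integers. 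With these remarks in place the proof is complete in a few lines, and the full stopping set distribution of $C_{\Omega}(D,mO)$ then follows by feeding $\#S(m)$ into Theorem~\ref{count}, giving Theorem~\ref{dist}.
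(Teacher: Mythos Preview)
Your proof is correct and follows exactly the paper's approach: specialize Proposition~\ref{klosters} to $\mathcal{G}=P$ and $b=O$, using Theorem~\ref{main}(iv) to identify $\#S(m)$ with $N(m,O,P\setminus\{O\})$. Your explicit verification that $e(O)=\exp(P)$ (hence $\gcd(e(O),s)=s$) is the one detail the paper leaves implicit, and you handle it correctly.
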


So together with Theorems~\ref{main} and~\ref{count}, we obtain Theorem~\ref{dist}.

It is well-known that the group $E(\f)$ of rational points is isomorphic to
\[
     E(\f)\cong \mathds{Z}/m_1\mathds{Z}\oplus \mathds{Z}/m_2\mathds{Z}\ ,
\]
for some integers $m_1|m_2$.
Then by
Theorems~\ref{main},~\ref{count} and \ref{distribution},
we can determine the stopping set distribution of the standard
residue AG code $C_{\Omega}(D, mO)$ from any elliptic curve $E/\f$
provided that we know the group structure of $E(\f)$. Explicitly, we
can compute $\#S(3)$ in Example~\ref{exm}:
\begin{displaymath}
\begin{array}{rl}
  \#S(3)= & \frac{1}{9}\sum_{s|9}(-1)^{3+\lfloor\frac{3}{s}\rfloor}\binom{9/s-1}{\lfloor 3/s\rfloor}\sum_{d|s}\mu(s/d)\left|\mathds{Z}/9\mathds{Z}[d]\right| \\
  =& \frac{1}{9}\left(\binom{8}{3}+\binom{2}{1}(3-1)-(9-3)\right)=6\ .
\end{array}
\end{displaymath}

So $\#S(4)=\binom{8}{4}-(8-3)\#S(3)=40$. This agrees with the
exhausting calculation in Example~\ref{exm}.

If we take special subgroups of $E(\f)$, then we have the following corollary.
\begin{cor}
 Notations as above.

 \emph{(i)} If we take
 \[
     P\cong \mathds{Z}/p^{t}\mathds{Z}
\]
 for some prime integer $p$ and integer $t\geqslant 1$, then
  \begin{equation*}
        \#S(m)=  \frac{1}{p^{t}}\left(\binom{p^{t}-1}{m}+(-1)^m(p^{t}-p^{\lfloor \log_{p}(m)\rfloor})\right.\\
        \left.+\sum_{i=1}^{\lfloor \log_{p}(m)\rfloor}(-1)^{m+\lfloor\frac{m}{p^i}\rfloor}(p^i-p^{i-1})\binom{p^{t-i}-1}{\lfloor\frac{m}{p^i}\rfloor}\right).
 \end{equation*}

 In particular, if $t=1$, then
 \[
     \#S(m)=\frac{1}{p}\left(\binom{p-1}{m}+(-1)^m(p-1))\right)\ .
 \]

 If $t=2$, then
 \begin{equation*}
        \#S(m)=   \frac{1}{p}\left(\binom{p-1}{m}+(-1)^m(p^2-p)+(-1)^{m+\lfloor\frac{m}{p}\rfloor}\right.\\
   \left.\cdot(p-1)\binom{p-1}{\lfloor\frac{m}{p}\rfloor}\right)\ .
 \end{equation*}

\emph{(ii)} If we take
 \[
     P\cong \mathds{Z}/p^{t_1}\mathds{Z}\oplus \mathds{Z}/p^{t_2}\mathds{Z}
\]
 for some prime integer $p$ and integers $1\leqslant t_1\leqslant t_2$, then
  \begin{equation*}
        \#S(m)=  \frac{1}{p^{t_1+t_2}}\left(\binom{p^{t_1+t_2}-1}{m}+\sum_{i=1}^{t_2}(-1)^{m+\lfloor\frac{m}{p^i}\rfloor}\right.\\
        \left.\cdot\binom{p^{t_1+t_2-i}-1}{\lfloor\frac{m}{p^i}\rfloor}(p^{i+\min\{i,t_1\}}-p^{i-1+\min\{i-1,t_1\}})\right)\ .
 \end{equation*}

\emph{(iii)} If we take
 \[
     P\cong \mathds{Z}/p_{1}^{t_1}\mathds{Z}\oplus \mathds{Z}/p_{2}^{t_2}\mathds{Z}
\]
 for two distinct prime integers $p_1,p_2$ and integers $t_1,t_2\geqslant 1$, then
 \begin{equation*}
    \begin{array}{cl}
        \#S(m)= & \frac{1}{p_{1}^{t_1}p_{2}^{t_2}}\left(\binom{p_{1}^{t_1}p_{2}^{t_2}-1}{m}+(p_{1}-1)(p_{2}-1)
        \cdot \sum\limits_{i=1}^{t_{1}}\sum\limits_{j=1}^{t_{2}}(-1)^{m+\lfloor\frac{m}{p_{1}^{i}p_{2}^{j}}\rfloor}p_{1}^{i-1}p_{2}^{j-1}\binom{p_{1}^{t_1-i}p_{2}^{t_2-j}-1}{\lfloor \frac{m}{p_{1}^{i}p_{2}^{j}}\rfloor}\right.\\
        &\left.+\sum\limits_{i=1}^{t_{1}}(-1)^{m+\lfloor\frac{m}{p_{1}^i}\rfloor}\binom{p_{1}^{t_1-i}p_{2}^{t_2}-1}{\lfloor\frac{m}{p_{1}^i}\rfloor}(p_{1}^{i}-p_{1}^{i-1})
        +\sum_{j=1}^{t_{2}}(-1)^{m+\lfloor\frac{m}{p_{2}^j}\rfloor}\binom{p_{1}^{t_1}p_{2}^{t_2-j}-1}{\lfloor\frac{m}{p_{2}^j}\rfloor}(p_{2}^{j}-p_{2}^{j-1})\right)\ .
     \end{array}
 \end{equation*}
 \end{cor}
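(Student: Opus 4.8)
The corollary is a sequence of specializations of Theorem~\ref{distribution}, so the plan is to start from the master formula
\[
   \#S(m)= \frac{1}{N}\sum_{s|\exp(P)}(-1)^{m+\lfloor m/s\rfloor}\binom{N/s-1}{\lfloor m/s\rfloor}\sum_{d|s}\mu(s/d)\#P[d]\ ,
\]
and in each of the three cases substitute the explicit group structure, evaluate $\exp(P)$, $N=\#P$, the divisor set of $\exp(P)$, and the torsion counts $\#P[d]$, and then simplify the inner M\"obius sum. For part~(i) with $P\cong \mathds{Z}/p^t\mathds{Z}$ we have $N=p^t$, $\exp(P)=p^t$, the divisors $s$ of $\exp(P)$ are exactly $1,p,p^2,\dots,p^t$, and $\#P[p^j]=p^{\min\{j,t\}}$. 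The inner sum $\sum_{d|s}\mu(s/d)\#P[d]$ telescopes: for $s=1$ it is $1$; for $s=p^i$ with $1\leqslant i\leqslant t$ it equals $\#P[p^i]-\#P[p^{i-1}]=p^i-p^{i-1}$ (using $i\leqslant t$). One then splits the range of $i$ at $\lfloor\log_p m\rfloor$: for $i>\lfloor\log_p m\rfloor$ we have $\lfloor m/p^i\rfloor=0$, so $\binom{p^{t-i}-1}{0}=1$ and the sign is $(-1)^m$, and these terms collapse to $(-1)^m\sum_{i>\lfloor\log_p m\rfloor}(p^i-p^{i-1})=(-1)^m(p^t-p^{\lfloor\log_p m\rfloor})$ by another telescope. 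The remaining terms ($1\leqslant i\leqslant\lfloor\log_p m\rfloor$) are kept as written, and the $s=1$ term gives $\binom{p^t-1}{m}$. The $t=1$ and $t=2$ sub-cases are then just read off (for $t=1$, $\lfloor\log_p m\rfloor=0$ whenever $m<p$, which is forced since $m<n=N-1=p-1$; for $t=2$ only $i=1$ survives in the middle sum and the leftover term is $(-1)^m(p^2-p^{1})=(-1)^m(p^2-p)$, wait — one must be careful: when $\lfloor\log_p m\rfloor=0$ the leftover is $(-1)^m(p^2-1)$, when it is $1$ it is $(-1)^m(p^2-p)$; I would state the $t=2$ formula for the generic case $p\leqslant m<p^2$ as the paper does, noting the range of validity).

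For part~(ii) with $P\cong\mathds{Z}/p^{t_1}\mathds{Z}\oplus\mathds{Z}/p^{t_2}\mathds{Z}$, $t_1\leqslant t_2$, we have $N=p^{t_1+t_2}$, $\exp(P)=p^{t_2}$, so $s$ ranges over $p^0,\dots,p^{t_2}$, and $\#P[p^j]=p^{\min\{j,t_1\}+\min\{j,t_2\}}=p^{\min\{j,t_1\}+j}$ for $j\leqslant t_2$. The inner M\"obius sum at $s=p^i$, $i\geqslant1$, is $\#P[p^i]-\#P[p^{i-1}]=p^{i+\min\{i,t_1\}}-p^{i-1+\min\{i-1,t_1\}}$, which is exactly the factor appearing in the claimed formula; the $s=1$ term gives $\binom{p^{t_1+t_2}-1}{m}$. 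No range-splitting is needed here — the paper keeps the full sum $\sum_{i=1}^{t_2}$ — so part~(ii) is a direct substitution.

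For part~(iii) with $P\cong\mathds{Z}/p_1^{t_1}\mathds{Z}\oplus\mathds{Z}/p_2^{t_2}\mathds{Z}$ and $p_1\neq p_2$, we have $N=p_1^{t_1}p_2^{t_2}$, $\exp(P)=p_1^{t_1}p_2^{t_2}$ (by CRT the group is cyclic of that order), so $s$ ranges over all $p_1^i p_2^j$ with $0\leqslant i\leqslant t_1$, $0\leqslant j\leqslant t_2$. Since $P$ is cyclic, $\#P[d]=d$ for every $d\mid N$, and the inner sum becomes the classical identity $\sum_{d|s}\mu(s/d)\,d=\varphi(s)$, Euler's totient. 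Thus $\varphi(p_1^i p_2^j)=\varphi(p_1^i)\varphi(p_2^j)$, which is $1$ when $i=j=0$, $(p_1^i-p_1^{i-1})$ when $j=0,i\geqslant1$, $(p_2^j-p_2^{j-1})$ when $i=0,j\geqslant1$, and $(p_1-1)(p_2-1)p_1^{i-1}p_2^{j-1}$ when $i,j\geqslant1$. Grouping the double sum over $(i,j)$ according to which of $i,j$ vanish produces exactly the four summands of the stated formula (the $i=j=0$ term being the binomial $\binom{p_1^{t_1}p_2^{t_2}-1}{m}$, pulled out front).

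I expect the routine bookkeeping — tracking the floor functions $\lfloor m/s\rfloor$ and the sign exponents $m+\lfloor m/s\rfloor$ through the telescoping in part~(i) — to be the only place that requires genuine care; the conceptual content is entirely in the two identities $\sum_{d|p^i}\mu(p^i/d)\#P[d]=\#P[p^i]-\#P[p^{i-1}]$ and, for cyclic $P$, $\sum_{d|s}\mu(s/d)\#P[d]=\varphi(s)$, both of which are elementary. The main subtlety to flag is the implicit bound $m<n=N-1$ (from $0<m<n$ in the setup together with $n+1=N$), which in part~(i) with $t=1$ forces $\lfloor\log_p m\rfloor=0$ and in part~(ii),(iii) bounds the effective range of $i,j$ — but since the paper writes the sums up to $t,t_2$, or $t_1,t_2$ unconditionally (with binomials $\binom{\cdot}{\lfloor m/\cdot\rfloor}=\binom{\cdot}{0}=1$ and matching sign $(-1)^m$ when the argument is zero), no separate argument is needed and the formulas hold verbatim.
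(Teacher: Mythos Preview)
Your proposal is correct and follows exactly the approach the paper intends: the corollary is stated in the paper without proof, as an immediate specialization of Theorem~\ref{distribution}, and your derivation---computing $\exp(P)$, $\#P[d]$, evaluating the M\"obius inner sum (which collapses to $\#P[p^i]-\#P[p^{i-1}]$ in the $p$-power cases and to $\varphi(s)$ in the cyclic case~(iii)), and then telescoping/regrouping---is precisely that specialization. Your caveat about the $t=2$ sub-case (that the displayed formula tacitly assumes $p\leqslant m$, i.e.\ $\lfloor\log_p m\rfloor=1$) is well taken and in fact the paper's printed $t=2$ formula appears to contain typos ($1/p$ and $\binom{p-1}{m}$ in place of $1/p^{2}$ and $\binom{p^{2}-1}{m}$), so your careful reading there is warranted.
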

\section{Conclusion}
In this paper, we study stopping sets
and stopping set distributions of residue algebraic geometry
codes $C_{\Omega}(D, G)$. Two descriptions of stopping sets of residue algebraic geometry
codes are presented. In particular,
there is a gap $\deg(G)-2g+2\leqslant i\leqslant \deg(G)+1$ where in general we
do not know whether a subset with cardinality $i$ is a stopping set
or not. In the case $g=0$, there is no gap and we have a complete
understanding. In the case $g=1$, using the group structure of
rational points of elliptic curves, we can characterize all the stopping
sets of algebraic geometry codes from elliptic curves. Then
determining the stopping sets, the stopping set distribution and the
stopping distance of $C_{\Omega}(D, G)$ are reduced to
$\mathrm{deg}(G)$-subset sum problems in finite abelian groups. In
the case $g>1$, only partial results can be obtained. It is still not known how to
compute the stopping set distribution.
For further work, there are two interesting problems:

(i) There are some papers contributing to compute the stopping redundancy of MDS codes \cite{1603762,4036420,5208498}. For AG codes from elliptic curves, we have seen that
the code is very closed to be MDS, i.e., MDS or near-MDS~\cite{shokrollahi} (an $[n,k,d]$ linear code is called \emph{near-MDS} if $d=n-k$ and the dual distance $d^{\bot}=k$). So how about the stopping redundancy of AG codes from elliptic curves?

(ii) In this paper, we have determined the stopping set distributions of AG codes from elliptic curves with the parity-check matrix $H^*$. Can we give optimal parity-check matrices for AG codes from elliptic curves?

%------------reference---------------------------------

\bibliographystyle{plain}
\bibliography{stoppingset}

\end{document}